\documentclass[%
endfloats*,
preprint,
nofootinbib,
amsmath,amssymb,
aps,
floatfix,
]{revtex4-1}

\usepackage{graphicx} 
\graphicspath{ {Figures/} }
\usepackage{epstopdf}
\usepackage[caption=false]{subfig}
\usepackage{amsmath}
\usepackage{amsfonts}
\usepackage{amssymb}
\usepackage{bm}
\usepackage{hyperref}
\usepackage{amsthm}
\usepackage{color}
\usepackage{mathrsfs}
\usepackage[vlined,ruled]{algorithm2e}
\usepackage{url}
\usepackage{color}
\usepackage{algorithmic}
\usepackage{bbm}
\usepackage{booktabs}
\usepackage{array}
\usepackage[table]{xcolor}
\usepackage{yfonts}
\usepackage{lineno}
\usepackage{mathdots}

\captionsetup[figure]{font={stretch=0.5}}

\newcommand{\subscr}[2]{{#1}_{\textup{#2}}}


\newcommand{\real}{\mathbb{R}}

\newcommand{\transpose}{\mathsf{T}} 

\newcommand{\mc}{\mathcal}

\newcommand{\methods}{Methods}
\newcommand{\SI}{Supplement}

\DeclareSymbolFont{bbold}{U}{bbold}{m}{n}
\DeclareSymbolFontAlphabet{\mathbbold}{bbold}

\begin{document}
\bibliographystyle{naturemag}
\title{Data-Driven Control of Complex Networks}
\author{Giacomo Baggio}
\affiliation{Department of Information Engineering, University of Padova, Padova, Italy}
\author{Danielle S.~Bassett}
\affiliation{Departments of Bioengineering, Physics \& Astronomy, Electrical
  \& Systems Engineering, 
  Neurology, and Psychiatry, University of Pennsylvania, Philadelphia, USA\\ Santa Fe Institute, Santa Fe, USA}
\author{Fabio Pasqualetti}
\affiliation{Department of Mechanical Engineering, University of
  California at Riverside, Riverside, USA\\ To whom correspondence should be addressed: \href{mailto:fabiopas@engr.ucr.edu}{fabiopas@engr.ucr.edu}}

\date{\today}

\ \\

\begin{abstract}
Our ability to manipulate the behavior of complex networks depends
  on the design of efficient control algorithms and, critically, on
  the availability of an accurate and tractable model of the network
  dynamics. While the design of control algorithms for network systems
  has seen notable advances in the past few years, knowledge of the
  network dynamics is a ubiquitous assumption that is difficult to
  satisfy in practice, especially when the network topology is large
  and, possibly, time-varying. In this paper we overcome this
  limitation, and develop a data-driven framework to control a complex
  dynamical network optimally and without requiring any knowledge of
  the network dynamics. Our optimal controls are constructed using a
  finite set of experimental data, where the unknown complex network
  is stimulated with arbitrary and possibly random inputs. In addition
  to optimality, we show that our data-driven formulas enjoy favorable
  computational and numerical properties even compared to their
  model-based counterpart. Although our controls are provably correct
  for networks with linear dynamics, we also characterize their
  performance against noisy experimental data and in the presence of
  nonlinear dynamics, as they arise when mitigating cascading failures in power-grid networks and when manipulating neural activity in brain networks.
\end{abstract}

\maketitle


\section{Introduction}
With the development of sensing, processing, and storing capabilities of modern sensors, massive volumes of information-rich data are now rapidly expanding in many physical and engineering domains, ranging from robotics \cite{SL-PP-AK-JI-DQ:18}, to biological \cite{VM:13,TJS-PSC-JAM:14} and economic sciences \cite{LE-JL:14}. Data are often dynamically generated by complex interconnected processes, and encode key information about the structure and operation of these networked phenomena. Examples include temporal recordings of functional activity in the human brain \cite{NBT:13}, phasor measurements of currents and voltages in the power distribution grid \cite{AB:10}, and streams of traffic data in urban transportation networks \cite{YL-YD-WK-ZL-FYW:14}. When first-principle models are not conceivable, costly, or difficult to obtain, this unprecedented availability of data offers a great opportunity for scientists and practitioners to better understand, predict, and, ultimately, control the behavior of real-world complex networks.

Existing works on the controllability of complex networks have focused exclusively on a model-based setting \cite{YYL-JJS-ALB:11,FP-SZ-FB:13q,NB-GB-SZ:17,GY-GT-BB-JS-YL-AB:15,SG-FP-MC-QKT-BYA-AEK-JDM-JMV-MBM-STG-DSB:15,YYL-ALB:16,GL-CA:16}, although, in practice, constructing accurate models of large-scale networks is a challenging, often unfeasible, task \cite{JC-SW:08,SGS-MT:11,MTA-JAM-GL-ALB-YYL:17}. In fact, errors in the network model (i.e., missing or extra links, incorrect link weights) are unavoidable, especially if the network is identified from data (see, e.g., \cite{DA-AC-DK-CM:09,MSH-KJG:10} and Fig.~\ref{fig:error_gramian}(a)).  This uncertainty is particularly important for network controllability, since, as exemplified in Fig.~\ref{fig:error_gramian}(b)-(c), the computation of model-based network controls tends to be unreliable and highly sensitive to model uncertainties, even for moderate size networks~\cite{JS-AEM:13,LZW-YZC-WXW-YCL:17}. It is therefore natural to ask whether network controls can be learned directly from data, and, if so, how well these data-driven control policies perform.

 \begin{figure}[t]
\centering

\

\vspace{2cm}

\includegraphics[width=0.75\linewidth]{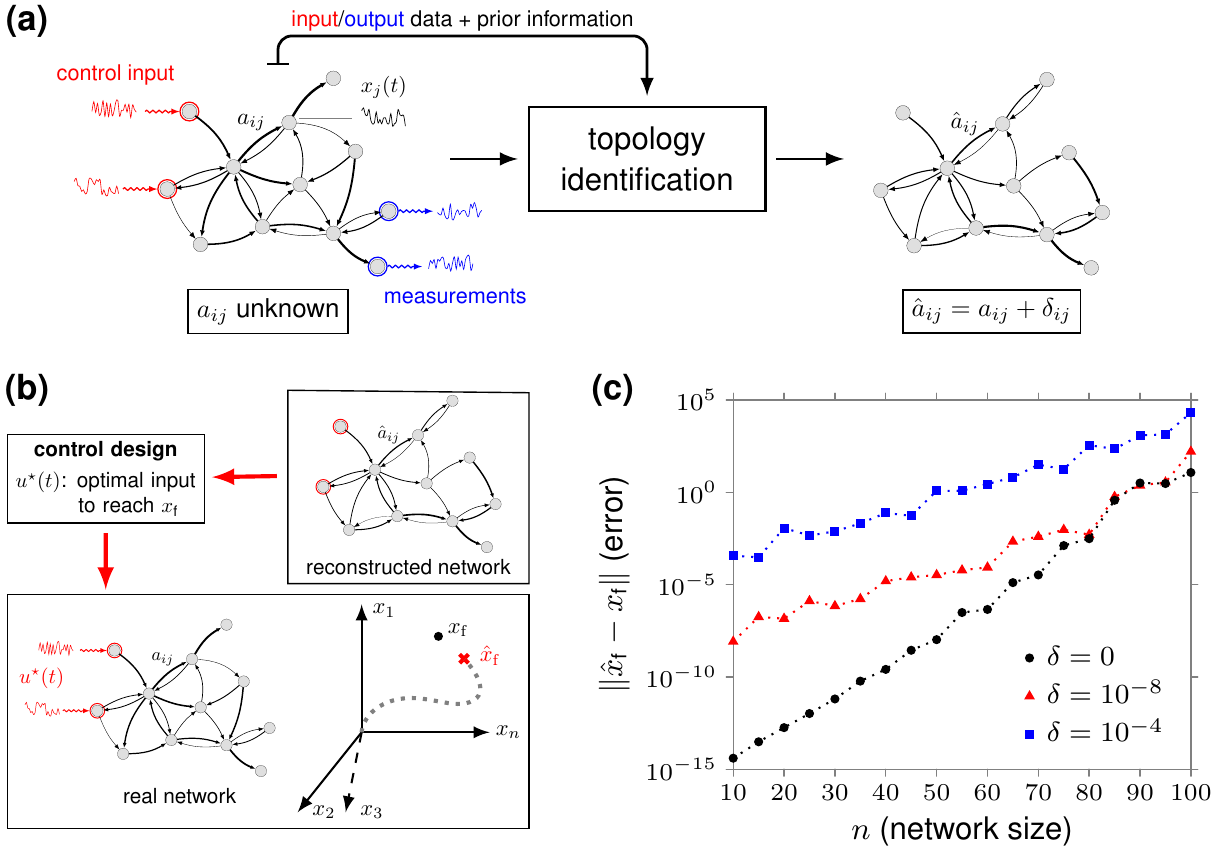} \vspace{1cm}
\caption{\linespread{1}\selectfont{}\textbf{The effect of model uncertainty in the computation of optimal network controls.} Panel \textbf{(a)} shows a schematic of a classic network identification procedure. The reconstructed network is affected by estimation errors~$\delta_{ij}$. Panel \textbf{(b)} illustrates the error in the final state induced by an optimal control design based on the reconstructed network. In panel \textbf{(c)}, we consider minimum-energy controls designed from exact and incorrectly reconstructed networks, and compute the resulting error in the final state as the network size $n$ varies.  We consider connected Erd\"os--R\'enyi networks with edge probability $p=\ln n /n +0.1$, $10$ randomly selected control nodes, control horizon $T=2n$, and a randomly chosen final state $\subscr{x}{f}$.  Each curve represents the average of the error in the final state over 100 random realizations. To mimic errors in the network reconstruction process, we add to each edge of the network a disturbance modeled as an i.i.d.~random variable uniformly distributed in $[-\delta,\delta]$, $\delta>0$. To compute minimum-energy control inputs, we use the classic Gramian-based formula and standard LAPACK linear algebra routines~(see~\methods). 
} 
\label{fig:error_gramian}
\end{figure}

Data-driven control of dynamical systems has attracted increasing interest over the last few years, triggered by recent advances and
successes in machine learning and artificial intelligence \cite{SL-CF-TD-PA:16,DS-et-al:17}. The classic (indirect) approach to learn controls from data is to use a sequential system identification and control design procedure. That is, one first identifies a model of the system from the available data, and then computes the desired controls using the estimated model \cite{MG:05}. However, identification algorithms are sometimes inaccurate and time-consuming, and several direct data-driven methods have been proposed to bypass the identification step \cite[Ch. III.10]{SLB-JNK:19}. These include, among others, (model-free) reinforcement learning \cite{FLL-DV-KGV:12,BR:18}, iterative learning control \cite{DAB-MT-AGA:06}, adaptive and self-tuning control \cite{KJA-BW:73}, and behavior-based~methods~\cite{IM-PR:08,CDP-PT:19}.

The above techniques differ in the data generation procedure, class of system dynamics considered, and control objectives. In classic reinforcement learning settings, data are generated online and updated under the guidance of a policy evaluator or reward
simulator, which in many applications is represented by an offline-trained (deep) neural network \cite{DPB-JNT:96}. Iterative learning control is used to refine and optimize repetitive control tasks: data are recorded online during the execution of a task repeated multiple times, and employed to improve tracking accuracy from trial to trial. In adaptive control, the structure of the controller is fixed and a few control parameters are optimized using data collected on the fly. A widely known example is the auto-tuning of PID controllers \cite{KJA-TH:95}. Behavior-based techniques exploit a trajectory-based (or behavioral) representation of the system, and data that typically consist of a single, noiseless, and sufficiently long input-output system trajectory \cite{CDP-PT:19}. Each of the above data-driven approaches has its own limitations and merits, which strongly depend on the intended application area. However, a common feature of all these approaches is that they are tailored to or have been employed for closed-loop control tasks, such as stabilization or tracking, and not for finite-time point-to-point~control~tasks.

In this paper, we address the problem of learning from data point-to-point optimal controls for complex dynamical networks. Precisely, following recent literature on the controllability of complex networks \cite{JG-YYL-RMD-ALB:14,IK-AS-FS:17b}, we focus on control policies that optimally steer the state of (a subset of) network nodes from a given initial value to a desired final one within a finite time horizon. To derive analytic, interpretable results that capture the role of the network structure, we consider networks governed by linear dynamics, quadratic cost functions, and data consisting of a set of control experiments recorded offline. Importantly, experimental data are not required to be optimal, and can even be generated through random control experiments. In this setting, we establish closed-form expressions of optimal data-driven control policies to reach a desired target state and, in the case of noiseless data, characterize the minimum number of experiments needed to exactly reconstruct optimal control inputs. Further, we introduce suboptimal yet computationally simple data-driven expressions, and discuss the numerical and computational
advantages of using our data-driven approach when compared to the classic model-based one. Finally, we illustrate with different numerical studies how our framework can be applied to restore the correct operation of power-grid networks after a fault, and to characterize the controllability properties of functional brain networks.

While the focus of this paper is on designing optimal control inputs, the expressions derived in this work also provide an
alternative, computationally reliable, and efficient way of analyzing the controllability properties of large network systems. This
constitutes a significant contribution to the extensive literature on the model-based analysis of network controllability, where the
limitations imposed by commonly used Gramian-based techniques limit the investigation to small and well-structured network topologies \cite{JS-AEM:13, LZW-YZC-WXW-YCL:17}.

\section{Results}

\subsection{Network dynamics and optimal point-to-point control} We consider networks governed by linear time-invariant dynamics
\begin{linenomath*}
\begin{equation}\label{eq:sys}
\begin{aligned}
	x(t+1) &= A x(t) + B u(t),  \\
	y(t) &= C x(t), 
\end{aligned}
\end{equation}
\end{linenomath*}
where $x(t)\in\real^{n}$, $u(t)\in\real^{m}$, and $y(t)\in\real^{p}$ denote, respectively, the state, input, and output of the network at
time~$t$. The matrix $A\in\real^{n\times n}$ describes the (directed and weighted) adjacency matrix of the network, and the  matrices $B\in\real^{n\times m}$ and $C\in\real^{p\times n}$, respectively, are typically chosen to single out prescribed sets of input and~output~nodes~of~the~network. 

In this work, we are interested in designing open-loop control policies that steer the network output $y(t)$ from an initial value
$y(0)=y_{0}$ to a desired one $y(T)=\subscr{y}{f}$ in $T$ steps. If $\subscr{y}{f}$ is output controllable \cite{TK:80,JG-YYL-RMD-ALB:14} (a standing assumption in this paper), then the latter problem admits a solution and, in fact, there are many ways to accomplish such a control task. Here, we assume that the network is initially relaxed ($x(0) = 0$), and we seek the control input
$\smash{u^{\star}_{0:T-1} = [u^{\star}(T-1)^{\transpose}\cdots\,  u^{\star}(0)^{\transpose}]^{\transpose}}$ that drives the output of
the network to $\subscr{y}{f}$ in $T$ steps and, at the same time, minimizes a prescribed quadratic combination of the control effort and locality of the controlled trajectories.

Mathematically, we study and solve the following constrained minimization~problem: 
\begin{linenomath*} 
\begin{equation}\label{eq:optimal_control} \begin{aligned}
	u^{\star}_{0:T-1} = \arg\min_{u_{0:T-1}}\ \ &  y_{1:T-1}^{\transpose}\, Q\, y_{1:T-1} + u_{0:T-1}^{\transpose}\, R\, u_{0:T-1} \\
	&\quad  \text{s.t. }\  \eqref{eq:sys}\ \text{ and }\  y_T = \subscr{y}{f},
\end{aligned}
\end{equation}
\end{linenomath*}
where $Q\succeq 0$ and $R\succ 0$ are tunable matrices\footnote{We let $A\succ (\succeq)\, 0$ denote a positive definite (semi-definite) matrix, and $A^{\transpose}$~the~transpose~of~$A$.} that penalize output deviation and input usage, respectively, and subscript $\cdot_{t_{1}:t_{2}}$ denotes the vector containing the samples of a trajectory in the time window $[t_{1},t_{2}]$, $t_{1}\le t_{2}$ (if $t_1=t_2$, we simply write $\cdot_{t_1}$). If $Q=0$ and $R=I$, then $u^{\star}_{0:T-1}$ coincides with the minimum-energy control to reach $\subscr{y}{f}$ in $T$ steps \cite{TK:80}. 

Equation \eqref{eq:optimal_control} admits a closed-form solution whose computation requires the exact knowledge of the network matrix $A$ and suffers from numerical instabilities (\methods). In the following section, we address this limitation by deriving model-free and reliable expressions of $u^{\star}_{0:T-1}$ that solely rely on experimental data collected during the~network~operation.

\begin{figure*}[t]  
\ 

\vspace{0.5cm}

\centering   \includegraphics[width=0.925\linewidth]{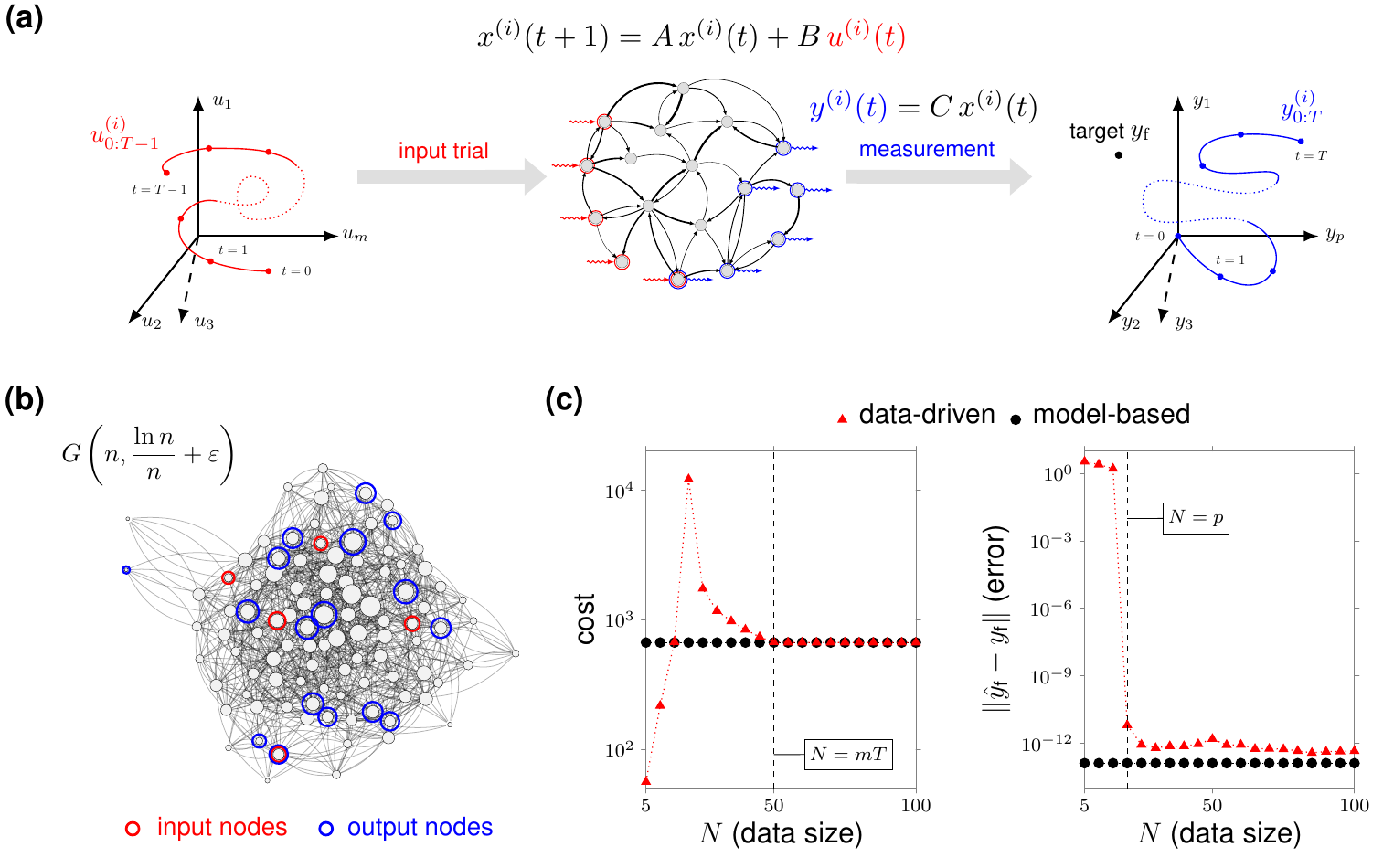}  \vspace{0.5cm} \caption{\linespread{1}\selectfont{}\textbf{Experimental setup and optimal data-driven network controls.} Panel \textbf{(a)} illustrates the data collection process. With reference to the $i$-th control experiment, a $T$-step input sequence $\smash{u_{0:T}^{(i)}}$ excites the network dynamics in \eqref{eq:sys}, and the time samples of the resulting output trajectory $\smash{y_{0:T}^{(i)}}$ are recorded. The input trajectory $\smash{u_{0:T}^{(i)}}$ may be generated randomly, so that the final output $\smash{y^{(i)}(T)}$ does not normally coincide with the desired target output $y_{\mathrm{f}}$. Red nodes denote the control or input nodes (forming matrix $B$) and the blue nodes denote the measured or output nodes (forming matrix $C$). Panel \textbf{(b)} shows a realization of the Erd\"os--R\'enyi graph model $G(n,p_{\text{edge}})$ used in our examples, where $n$ is the number of nodes, $p_{\text{edge}}$ is the edge probability, $m$ is the number of input nodes (red nodes), and $p$ the number of output nodes (blue nodes). We set the edge probability to $p=\ln n/n +\varepsilon$, $\varepsilon=0.05$, to ensure connectedness with high probability, and normalize the resulting adjacency matrix by a factor $\smash{\sqrt{n}}$. Panel \textbf{(c)} shows the value of the cost function (\emph{left}) and the error in the final state (\emph{right}) for the data-driven input \eqref{eq:dd} and the model-based control for a randomly chosen target $y_{\text{f}}$, as a function of the number of data points. We choose $Q=R=I$, $n=100$, $T=10$, $m=5$, and $p=20$, and consider Erd\"os--R\'enyi networks as in panel \textbf{(b)}. In all simulations the entries of the input data matrix $U$ are normal i.i.d.~random variables, and the input and output nodes are randomly selected. Target controllability is always ensured for all choices of input nodes by adding self-loops and edges that guarantee strong connectivity when needed. All curves represent the average over 500 realizations of data, networks, and input/output nodes. For additional computational details, see \methods.}
\label{fig:experiments}
\end{figure*}

\subsection{Learning optimal controls from non-optimal data} 
We assume that the network matrix $A$ is unknown and that $N$ control experiments have been performed with the dynamical network in \eqref{eq:sys}. The $i$-th experiment consists of generating and applying the input sequence $\smash{u_{0:T-1}^{(i)}}$, and measuring the resulting output trajectory $\smash{y_{0:T}^{(i)}}$ (Fig.~\ref{fig:experiments}(a)). Here, as in, e.g., \cite{SD-HM-NM-BR-ST:18}, we consider episodic experiments where the network state is reset to zero before running a new trial, and refer to the \SI\ for an extension to the non-episodic setting and to the case of episodic experiments with non-zero initial state resets. We let $U_{0:T-1}$, $Y_{1:T-1}$, and $Y_{T}$ denote the matrices containing, respectively, the experimental inputs, the output measurements in the time interval $[1,T-1]$, and the output measurements at time $T$. Namely, 
\begin{linenomath*} 
\begin{equation} 
\begin{aligned}\label{eq:data}
	U_{0:T-1} &= \left[u_{0:T-1}^{(1)}\ \ \cdots\ \ u_{0:T-1}^{(N)}\right], \\
	Y_{1:T-1} &= \left[y_{1:T-1}^{(1)} \ \ \cdots\ \ y_{1:T-1}^{(N)}\right], \\ 
	Y_{T} &= \Big[y^{(1)}_T\ \ \cdots\ \ y^{(N)}_T\Big].
\end{aligned}
\end{equation}
\end{linenomath*}
An important aspect of our analysis is that we do not require the input experiments to be optimal, in the sense of \eqref{eq:optimal_control}, nor do we investigate the problem of experiment design, i.e., generating data that are ``informative'' for
our problem. In our setting, data are given, and these are generated from arbitrary, possibly random, or carefully chosen
experiments.

By relying on the data matrices in \eqref{eq:data}, we derive the following data-driven solution to the minimization problem in \eqref{eq:optimal_control} (see the \SI):
\begin{linenomath*}
\begin{equation}\label{eq:dd}
	\hat{u}_{0:T-1} = U_{0:T-1}(I-K_{Y_{T}}(LK_{Y_{T}})^{\dag}L)Y_{T}^{\dag}\,\subscr{y}{f},
\end{equation}
\end{linenomath*}
where $L$ is any matrix satisfying $L^{\transpose}L=Y_{1:T-1}^{\transpose}QY_{1:T-1} + U_{0:T-1}^{\transpose}RU_{0:T-1}$, $K_{Y_{T}}$ denotes a matrix whose columns form a basis of the kernel of $Y_{T}$, and the superscript symbol $\cdot^{\dag}$ stands for the Moore--Penrose pseudoinverse operation \cite{AB-TNEG:03}.  
 
\subsubsection{Minimum number of data to learn optimal controls} 
Finite data suffices to exactly reconstruct the optimal control input via the data-driven expression in \eqref{eq:dd} (see the \SI). In Fig.~\ref{fig:experiments}(c), we illustrate this fact for the class of Erd\"os--R\'enyi networks of Fig.~\ref{fig:experiments}(b). Specifically, the data-driven input $\hat{u}_{0:T-1}$ equals the optimal one $u^{\star}_{0:T-1}$ (for any target $\subscr{y}{f}$) if the data matrices in \eqref{eq:data} contain $mT$ linearly independent experiments; that is, if $U_{0:T-1}$ is full row rank (Fig.~\ref{fig:experiments}(c), \emph{left}). We stress that linear independence of the control experiments is a mild condition that is normally satisfied when the experiments are generated randomly. Further, if the number of independent trials is smaller than $mT$ but larger than or equal to $p$, the data-driven control $\hat{u}_{0:T-1}$ still correctly steers the network output to $\subscr{y}{f}$ in $T$ steps (Fig.~\ref{fig:experiments}(c), \emph{right}), but with a cost that is typically larger than the optimal one. In this case, $\hat{u}_{0:T-1}$ is a suboptimal solution to \eqref{eq:optimal_control}, which becomes optimal (for any $\subscr{y}{f}$) if the collected data contain $p$ independent trials that are~optimal~as~well.

\begin{figure*}[t] 
\
\vspace{0.75cm}

 \centering 
 
 \includegraphics[width=1\linewidth]{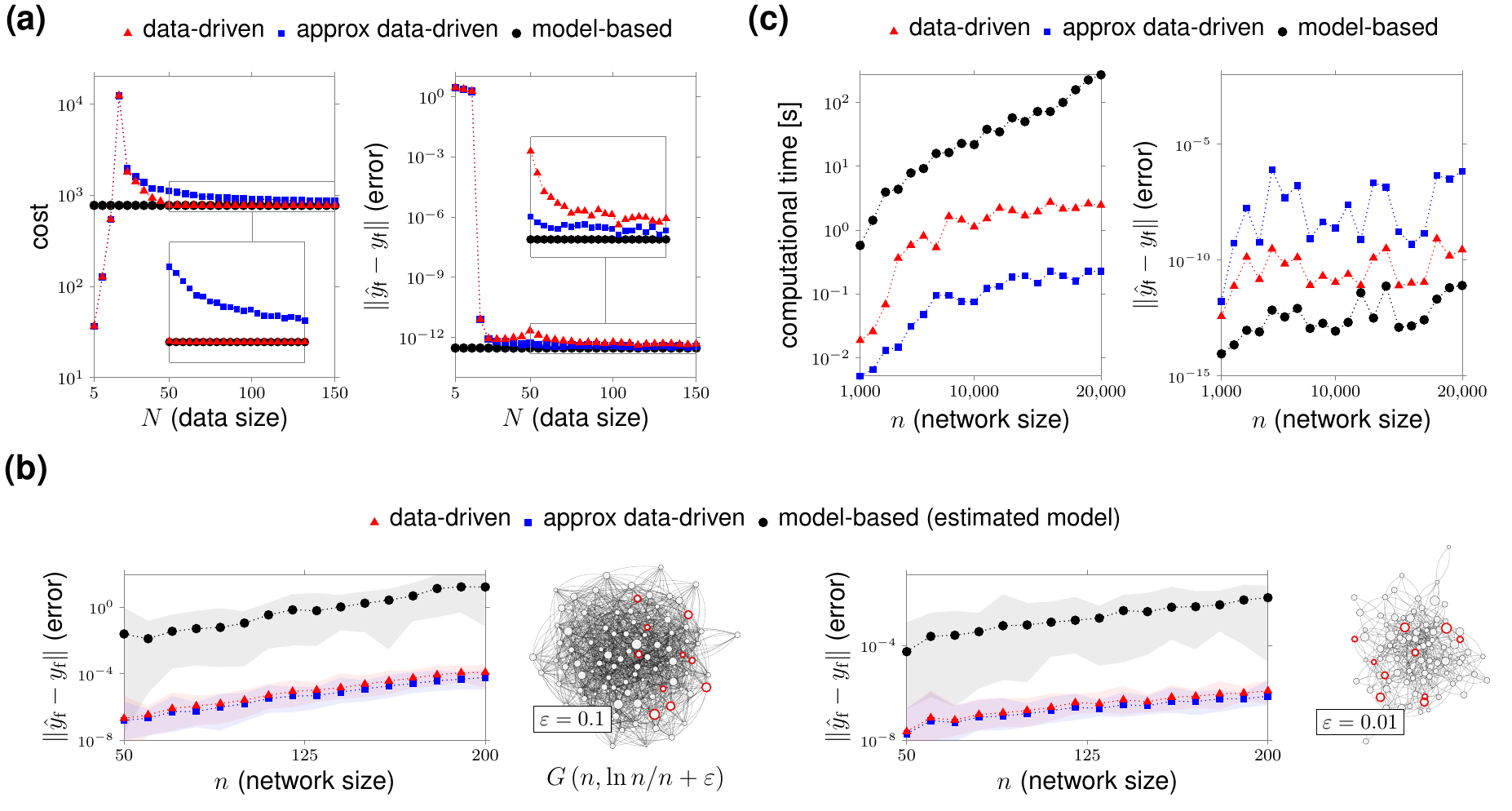}  \vspace{0.25cm} \caption{\linespread{1}\selectfont{}\textbf{Performance of minimum-energy data-driven network controls.} Panel \textbf{(a)} shows the value of the cost function (\emph{left}) and the error in the final state (\emph{right}) for the minimum-energy data-driven control inputs \eqref{eq:dd-min-en} and \eqref{eq:dd-min-en-approx}, and the model-based one as a function of the number of data $N$. We consider a randomly chosen target $y_{\text{f}}$, Erd\"os--R\'enyi networks as in Fig.~\ref{fig:experiments}\textbf{(b)} with $\varepsilon=0.05$, and parameters $n=100$, $T=10$, $m=5$, $p=20$. In Panel \textbf{(b)}, we compare the error in the final state of the data-driven approach (\eqref{eq:dd-min-en} and \eqref{eq:dd-min-en-approx}), and the classic two-step approach (i.e., network identification followed by model-based control design) for a randomly chosen target $y_{\text{f}}$, as a function of the network size $n$. We use the subspace-based identification procedure described in \methods, Erd\"os--R\'enyi networks as in Fig.~\ref{fig:experiments}\textbf{(b)} with two different edge densities determined through the parameter $\varepsilon$, and parameters $T=40$, \smash{$m=\lfloor n/10\rfloor$}, $p=n$, $N=mT+10$. The curves in panels \textbf{(a)} and \textbf{(b)} represent the average over 500 realizations of data, networks, and input/output nodes, and the light-colored regions in panel \textbf{(b)} contain the values of all realizations. Panel \textbf{(c)}, \emph{left}, compares the time needed to compute the optimal controls via data-driven and model-based strategies as a function of the network size, for a randomly chosen target $y_{\text{f}}$ and one realization of the Erd\"os--R\'enyi network model and data. Panel \textbf{(c)}, \emph{right}, shows the errors in the final state. We use the following parameters: $\varepsilon=0.05$, \smash{$m=\lfloor n/100\rfloor$}, \smash{$p=\lfloor n/50\rfloor$}, $\smash{T=50}$, and $N=mT+100$. In all simulations the entries of the input data matrix $U_{0:T-1}$ are normal i.i.d.~variables, and the input and output nodes are randomly selected. Further, target controllability is always ensured for all choices of input nodes by adding self-loops and edges that guarantee strong connectivity when needed. For additional computational details, see~\methods.  }
\label{fig:experiments2}
\end{figure*}

\subsubsection{Data-driven minimum-energy control} By letting $Q=0$ and $R=I$ in \eqref{eq:dd}, we recover a data-driven expression for the $T$-step minimum-energy control to reach $\subscr{y}{f}$. We remark that the family of minimum-energy controls has been extensively employed to characterize the fundamental capabilities and limitations of controlling networks, e.g., see \cite{FP-SZ-FB:13q,GY-GT-BB-JS-YL-AB:15,GL-CA:16}. After some algebraic manipulations, the data-driven minimum-energy control input can be compactly rewritten as (see the \SI)
\begin{linenomath*}
\begin{equation}\label{eq:dd-min-en}
	\hat{u}_{0:T-1} 
	= (Y_{T}U_{0:T-1}^{\dag})^{\dag}\,\subscr{y}{f}.
\end{equation}
\end{linenomath*}
The latter expression relies on the final output measurements only (matrix $Y_{T}$) and, thus, it does not exploit the full output data
(matrix $Y_{1:T-1}$).  Equation \eqref{eq:dd-min-en} can be further approximated~as
\begin{linenomath*}
\begin{equation}\label{eq:dd-min-en-approx}
	\hat{u}_{0:T-1} \approx U_{0:T-1} Y_{T}^{\dag}\,\subscr{y}{f}.
\end{equation}
\end{linenomath*}
This is a simple, suboptimal data-based control sequence that correctly steers the network to $\subscr{y}{f}$ in $T$ steps, as long as $p$ independent data are available. Further, and more importantly, when the input data samples are drawn randomly and independently from a distribution with zero mean and finite variance, \eqref{eq:dd-min-en-approx} converges to the minimum-energy control in the limit of infinite data (see the \SI). 

Fig.~\ref{fig:experiments2}(a) compares the performance (in terms of control effort and error in the final state) of the two data-driven expressions \eqref{eq:dd-min-en} and \eqref{eq:dd-min-en-approx}, and the model-based control as a function of the data size $N$. While the data-driven control in \eqref{eq:dd-min-en} becomes optimal for a finite number of data (precisely, for $N=mT$), the approximate expression \eqref{eq:dd-min-en-approx} tends to the optimal control only asymptotically in the number of data (Fig.~\ref{fig:experiments2}(a), \emph{left}). In both cases, the error in the final state goes to zero after collecting $N=p$ data (Fig.~\ref{fig:experiments2}(a), \emph{right}). For the approximate control \eqref{eq:dd-min-en-approx}, we also establish upper bounds on the size of the dataset to get a prescribed deviation from the optimal cost in the case of Gaussian noise. Our non-asymptotic analysis indicates that this deviation is proportional to the worst-case control energy required to reach a unit-norm target. This, in turn, implies that networks that are ``easy'' to control require fewer trials to attain a prescribed approximation error~(see the \SI).

\subsubsection{Numerical and computational benefits of data-driven
  controls} By relying on the same set of experimental data, in
Fig.~\ref{fig:experiments2}(b), we compare the numerical accuracy, as
measured by the error in the final state, of the data-driven controls
 \eqref{eq:dd-min-en} and \eqref{eq:dd-min-en-approx} and the
minimum-energy control computed via a standard two-step approach
comprising a network identification step followed by model-based
control design. First, we point out that if some nodes of the network are not accessible $(C\ne I)$ and no prior information about the network structure is available, then it is impossible to exactly reconstruct the network matrix $A$ using (any number of) data \cite{PPE-VC-SW:13}. In contrast, the computation of minimum-energy inputs is always feasible via our data-driven expression, provided that enough data are collected. We thus focus on the case in which all nodes can be
accessed ($C=I$). We consider Erd\"os--R\'enyi networks with $n$ nodes as in
Fig.~\ref{fig:experiments}(b) and we select $m= \lfloor n/10  \rfloor$ control nodes (forming matrix $B$). To reconstruct the network matrices $A$ and
$B$, the subspace-based identification technique described in
\methods. Data-driven strategies significantly outperform the standard
sequential approach for both dense (Fig.~\ref{fig:experiments2}(b),
\emph{top}) and sparse topologies (Fig.~\ref{fig:experiments2}(b),
\emph{bottom}). This poor performance of the standard approach is somehow expected because, independently of the
network identification procedure, the standard two-step approach
requires a number of operations larger than those required by the
data-driven approach, resulting in a potentially higher sensitivity to
round-off errors.  Also, it is interesting to note that the
data-driven approach is especially effective for large, dense networks
for which the standard approach leads to errors of considerable
magnitude (up to approximately $10^{2}$).

A further advantage in using data-driven controls over model-based ones arises when dealing with massive networks featuring a small fraction of input and output nodes. Specifically, in Fig.~\ref{fig:experiments2}(c) we plot the time needed to numerically compute the data-driven and model-based controls as a function of the size of the network. We focus on Erd\"os--R\'enyi networks as in Fig.~\ref{fig:experiments}(b) of dimension $n\ge 1000$ with $\lfloor n/100 \rfloor$ input and output nodes and a control horizon $T=50$. The model-based control input requires the computation of the first $T-1$ powers of $A$ (\methods). The computation of the data-driven expressions \eqref{eq:dd-min-en} and \eqref{eq:dd-min-en-approx} involves, instead, linear-algebraic operations on two matrices ($U_{0:T-1}$ and $Y_{T}$) that are typically smaller than $A$ when $n$ is very large  (precisely, when $T<n/m$ and $N<n$). 
Thus, the computation of the control input via the data-driven
approach is normally faster than the classic model-based computation
(Fig.~\ref{fig:experiments2}(c), \emph{left}). In particular, the
data-driven control \eqref{eq:dd-min-en-approx}, although
suboptimal, yields the most favorable performance due to its
particularly simple expression. Finally, we note that the error in the final state committed by the data-driven controls is always upper bounded by $10^{-5}$ and thus it has a negligible effect on the control accuracy (Fig.~\ref{fig:experiments2}(c),~\emph{right}).

\subsubsection{Data-driven controls with noisy data}
The analysis so far has focused on noiseless data. A natural question is how the data-driven controls behave in the case of noisy data. If the noise is unknown but small in magnitude, then the established data-driven expressions will deviate slightly from the correct values (see the \SI). However, if some prior information on the noise is known, this information can be exploited to return more accurate control expressions. A particularly relevant case is when data are corrupted by additive i.i.d.~noise with zero mean and known variance.\footnote{The different types of noise are assumed to be zero-mean to simplify the exposition. With slight modifications, non zero-mean noise could also be accommodated by our approach.}
Namely, the available data read as
\begin{linenomath*} 
\begin{equation} 
\begin{aligned}\label{eq:noisy-data}
	U_{0:T-1} &= \bar U_{0:T-1} +\Delta_{U}, \\
	Y_{1:T-1} &= \bar Y_{1:T-1} + \Delta_{Y}, \\ 
	Y_{T} &= \bar Y_{T} + \Delta_{Y_{T}},
\end{aligned}
\end{equation}
\end{linenomath*}
where $\bar U_{0:T-1}$, $\bar Y_{1:T-1}$, $\bar Y_{T}$ denote the ground truth values and $\Delta_{U}$, $\Delta_{Y}$, and $\Delta_{Y_{T}}$ are random matrices with i.i.d.~entries with zero mean and variance $\sigma_{U}^{2}$, $\sigma_{Y}^{2}$, and $\sigma_{Y_{T}}^{2}$, respectively. Note that the noise terms $\Delta_{Y}$ and $\Delta_{Y_{T}}$ may also include the contribution of process noise acting on the network dynamics. In this setting, it can be shown that the data-driven control \eqref{eq:dd} and the data-driven minimum-energy controls \eqref{eq:dd-min-en} and \eqref{eq:dd-min-en-approx} are typically not consistent; that is, they do not converge to the true control inputs as the data size tends to infinity (see the \SI\ for a concrete example). However, by suitably modifying these expressions, it is possible to recover asymptotically correct data-driven formulas (\SI). The key idea is to add correction terms that compensate for the noise variance arising from the pseudoinverse operations. In particular, the asymptotically correct version of the data-driven controls \eqref{eq:dd-min-en} and \eqref{eq:dd-min-en-approx} read, respectively, as
\begin{linenomath*}
\begin{eqnarray} 
	\hat u'_{0:T-1} &=&
                          (Y_{T}U_{0:T-1}^{\transpose}(U_{0:T-1}U_{0:T-1}^{\transpose}-
                          N\sigma_{U}^{2}I)^{\dag})^{\dag}\subscr{y}{f}
                          , \label{eq:dd-min-energy-corr} \\
	\hat{u}''_{0:T-1} &=& U_{0:T-1} Y_{T}^{\transpose}(Y_{T}Y_{T}^{\transpose}-N\sigma^{2}_{Y_{T}}I)^{\dag}\,\subscr{y}{f}, \label{eq:dd-min-en-approx-noise}
\end{eqnarray}
\end{linenomath*}
where we used the fact that $X^{\dag}=X^{\transpose}(XX^{\transpose})^{\dag}$ for any matrix $X$ \cite{AB-TNEG:03}, and $N\sigma_{U}^{2}I$ and $N\sigma_{Y_{T}}^{2}I$ represent the noise-dependent correction terms. Note, in particular, that if the noise corrupts the output data $Y_{T}$ only, then \eqref{eq:dd-min-energy-corr} coincides with the original data-driven control \eqref{eq:dd-min-en}, so that no correction is needed. Similarly, if the noise corrupts the input data $U_{T}$ only, then \eqref{eq:dd-min-en-approx-noise} coincides with the data-driven control \eqref{eq:dd-min-en-approx}.

\subsection{Applications}

To demonstrate the potential relevance and applicability of the data-driven framework presented thus far, we present two applications of our data-driven control formulas. 

\subsubsection{Data-driven fault recovery in power-grid networks}

\begin{figure*}[t] \centering
  \includegraphics[width=1\linewidth]{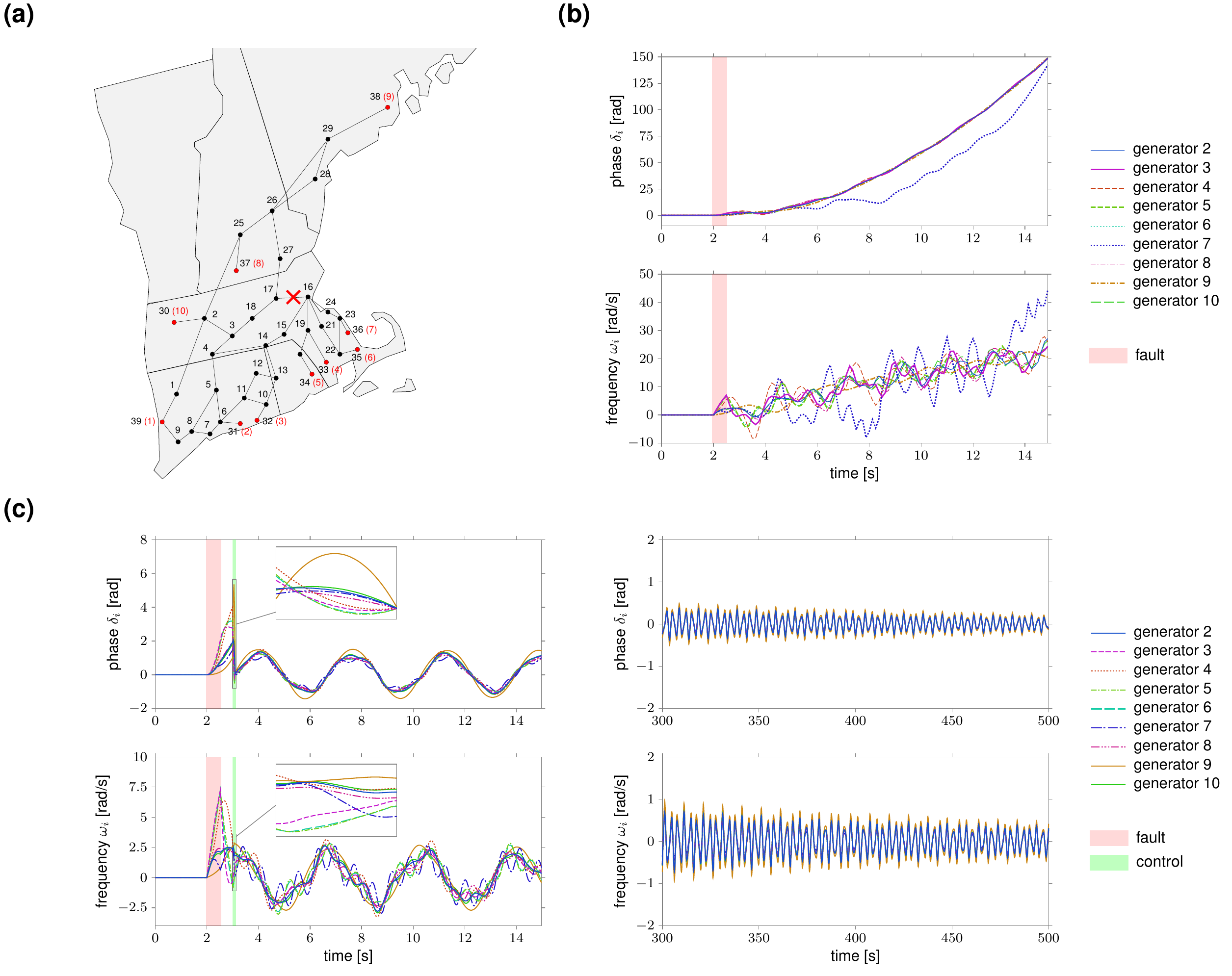}
  \caption{\linespread{1}\selectfont{}\ \ \textbf{Data-driven fault recovery in the New England power-grid network.} Panel \textbf{(a)} depicts the 39-node New England power-grid network. The black nodes $\{1,\dots, 29\}$ represent load nodes, while the red  nodes $\{30,\dots, 39\}$ are power generators. The generators are labelled according to the numbers in the red brackets. The red cross denotes the location of the fault. Panel \textbf{(b)} plots the behavior of the phases and frequencies of generators $\{2,\dots, 10\}$ after the occurrence of the fault. The onset time of the fault is $t=2$s and the fault duration is $0.5$s (red area in the plots). At time $t=2.5$s the fault is cleared. The phase and frequency of generator $1$ (not shown) are fixed to a constant (see \methods). The left plots of panel \textbf{(c)} show the behavior of the phases and frequencies of generators $\{2,\dots, 10\}$ after the application of the data-driven control input \eqref{eq:dd}. The duration of the control action is $0.1$s (green area in the plots) which correspond to a control horizon $T=400$ for the discretized network dynamics with sampling period $T_{s}=2.5\times 10^{-4}$s. For the computation of the control input, we employ $N=4000$ experimental data collected offline by perturbing the state of the generators locally around its steady-state value (see \methods). We use weighting matrices $R=I$ and $Q=\varepsilon I$, where $\varepsilon$ is set to a small non-zero constant ($\varepsilon = 0.01$) to guarantee both limited control effort and locality of the controlled trajectories. The insets illustrate the behavior of the phases and frequencies during the application of the control. The right plots of panel \textbf{(c)} show the asymptotic behavior of the phases and frequencies of generators $\{2,\dots, 10\}$ after the application of the data-driven control \eqref{eq:dd}.
}
\label{fig:grid}
\end{figure*}

We address the problem of restoring the normal operation of a
  power-grid network after the occurrence of a fault which
  desynchronizes part of the grid. If not mitigated in a timely manner, such
  desynchronization instabilities may trigger cascading failures that
  can ultimately cause major blackouts and grid disruptions
  \cite{YS-IM-TH:11,SPC-LWK-AEM:13,JWSP-FD-FB:16}. In our case study,
  we consider a line fault in the New England power grid network comprising 39 nodes
  (29 load nodes and 10 generator nodes), as depicted in
  Fig.~\ref{fig:grid}(a), and we compute an optimal point-to-point control from data to recover the correct operation of the grid. A similar problem is solved in \cite{SPC-LWK-AEM:13} using a more sophisticated control strategy which requires knowledge of the network dynamics. As in \cite{YS-IM-TH:11,SPC-LWK-AEM:13}, we
  assume that the phase $\delta_{i}$ and the (angular) frequency
  $\omega_{i}$ of each generator $i$ obey the swing equation dynamics
  with the parameters given in \cite{YS-IM-TH:11} (except for
  generator 1 whose phase and frequency are fixed to a constant, cf.
  \methods). Initially, each generator operates at a locally stable
  steady-state condition determined by the power flow equations. At
  time $t=2$s, a three-phase fault occurs in the transmission line
  connecting nodes 16 and 17. After $0.5$s the fault is cleared;
  however the generators have lost synchrony and deviate from their
  steady-state values (Fig.~\ref{fig:grid}(b)). To recover the normal
  behavior of the grid, $0.5$s after the clearance of the fault, we
  apply a short, optimal control input to the frequency of the
  generators to steer the state (phase and frequency) of the
  generators back to its steady-state value. The input is computed
  from data via \eqref{eq:dd} using $N=4000$ input/state experiments
  collected by locally perturbing the state of the generators around
  its normal operation point (see also \methods). We consider data
  sampled with period $T_{s}=2.5\times 10^{-4}$s, and set the control
  horizon to $T=400$ time samples (corresponding to $0.1$s), $R=I$,
  and $Q=\varepsilon I$ with $\varepsilon = 0.01$ to enforce locality
  of the controlled trajectories. As shown in Fig.~\ref{fig:grid}(c),
  the data-driven input drives the state of the generators to a point
  close enough to the starting synchronous solution (\emph{left},
  inset) so as to asymptotically recover the correct operation
  of the grid (\emph{right}). Notably, as previously discussed,
  the computation of the control input requires only pre-collected
  data, is numerically efficient, and optimal (for the linearized dynamics).
  More generally, this numerical study
  suggests that the data-driven strategy \eqref{eq:dd} could represent
  a simple, viable, and computationally-efficient approach to control
  complex non-linear networks around an operating point.

\begin{figure*} \centering 
  \includegraphics[width=1\linewidth]{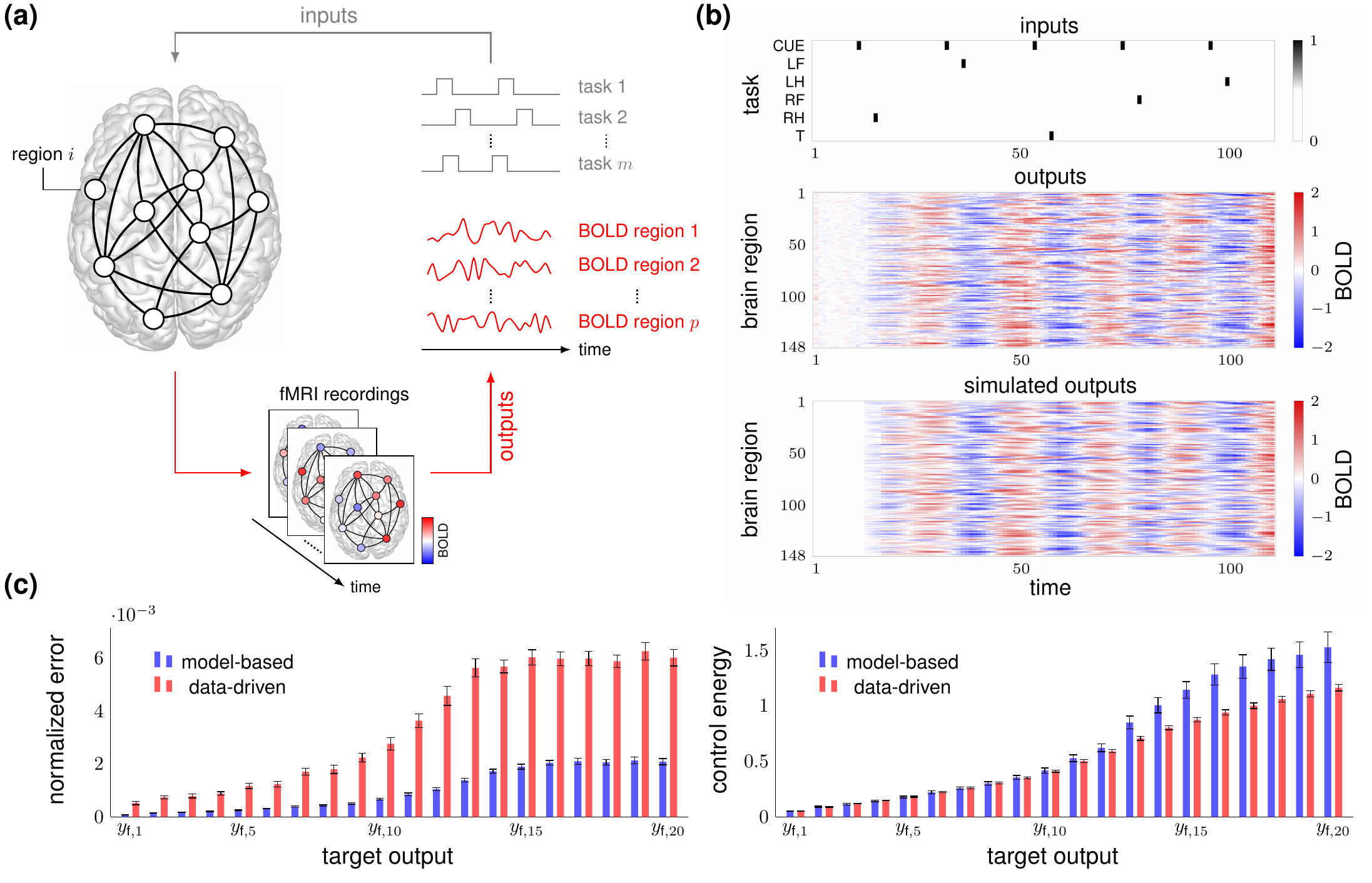}
  \caption{\linespread{1}\selectfont{}\textbf{Data-driven control of functional brain networks.}
    Panel \textbf{(a)} provides a schematic of the experimental setup.
    A set of external stimuli represented by $m$ different task
    commands induce brain activity. Functional magnetic resonance
    (fMRI) blood oxygen level dependent (BOLD) signals are measured
    and recorded at different times and converted into $p$ time
    series, one for each brain region. The top and center heatmaps of
    panel \textbf{(b)} show the inputs and outputs, respectively, for
    the first 110 measurements of one subject of the HCP dataset. The
    inputs are divided into $m=6$ channels corresponding to different
    task conditions, i.e., CUE (a visual cue preceding the occurrence
    of other task conditions), LF (squeeze left toe), LH (tap left
    fingers), RF (squeeze right toe), RH (tap right finger), and T
    (move tongue). As in \cite{CO-DSB-VMP:18}, each input is a binary
    0-1 signal taking the value 1 when the corresponding task
    condition is issued and 0 otherwise. The outputs represent the
    BOLD signals of the $p=148$ brain regions obtained from and
    enumerated according to the Destrieux 2009 atlas
    \cite{CD-BF-AD-EH:10}. These outputs have been minimally
    pre-processed following standard techniques \cite{MFG-et-al:13} as
    detailed in \methods. The bottom heatmap of panel \textbf{(b)}
    displays the simulated outputs obtained by exciting the
    approximate low-dimensional linear model of \cite{CO-DSB-VMP:18}
    with the input sequence of the top plot. In panel \textbf{(c)}, we
    compare the performance of the data-driven and model-based
    strategy, assuming that the dynamics obey the above-mentioned
    approximate linear model. We set the control horizon to $T=100$
    and generate the data matrices by sliding a time window of size
    $T$ across the data samples. The target state $y_{\text{f},i}$ is
    the eigenvector associated with the $i$-th eigenvalue of the
    empirical Gramian matrix
    \smash{$\hat{\mathcal{W}}_{T}=\hat{\mathcal{C}}_{T}^{\transpose}\hat{\mathcal{C}}_{T}$},
    where \smash{$\hat{\mc C}_{T} = Y_{T}U_{0:T-1}^{\dagger}$}. The
    top plot shows the error to reach the targets
    \smash{$\{y_{\text{f},i}\}_{i=1}^{20}$} using the data-driven
    minimum-energy input in \eqref{eq:dd-min-en} and the model-based
    one. The bottom plot shows the norm of the two inputs. The colored
    bars denote the mean over 100 unrelated subjects and the error
    bars represent the 95\% confidence intervals around the mean. }
\label{fig:fMRI}
\end{figure*}

\subsubsection{Controlling functional brain networks via fMRI snapshots} 
We investigate the problem of generating prescribed patterns of activity in functional brain networks directly from task-based functional magnetic resonance imaging (task-fMRI) time series. Specifically, we examine a dataset of task-based fMRI experiments related to motor activity extracted from the Human Connectome Project (HCP) \cite{DCVE-et-al:13} (see Fig.~\ref{fig:fMRI}(a)). In these experiments, participants are presented with visual cues that ask them to execute specific motor tasks; namely, tap their left or right fingers, squeeze their left or right toes, and move their tongue. We consider a set of $m=6$ input channels associated with different task-related stimuli; that is, the motor tasks' stimuli and the visual cue preceding them. As in \cite{CO-DSB-VMP:18}, we encode the input signals as binary time series taking the value of 1 when the corresponding task-related stimulus occurs and 0 otherwise. The output signals consist of minimally pre-processed blood-oxygen-level-dependent (BOLD) time series associated with the fMRI measurements at different regions of the brain (see also \methods). In our numerical study, we parcellated the brain into $p=148$ brain regions (74 regions per hemisphere) according to the Destrieux 2009 atlas \cite{CD-BF-AD-EH:10}. Further, as a baseline for comparison, we approximate the dynamics of the functional network with a low-dimensional ($n=20$) linear model computed via the approach described in \cite{CO-DSB-VMP:18}, which has been shown to accurately capture the underlying network dynamics. 

In Fig.~\ref{fig:fMRI}(b), we plot the inputs (\emph{top}) and outputs (\emph{center}) of one subject for the first sequence of five motor tasks. The bottom plot of the same figure shows the outputs obtained by approximating the network dynamics with the above-mentioned linear model. In Fig.~\ref{fig:fMRI}(c), we compare the performance of the minimum-energy data-driven control in \eqref{eq:dd-min-en} with the model-based one, assuming that the network obeys the dynamics of the approximate linear model. We choose a control horizon $T=100$, form the data matrices in \eqref{eq:data} by sliding a window of fixed size $T$ over the available fMRI data, and consider a set of 20 orthogonal targets  \smash{$\{y_{\text{f},i}\}_{i=1}^{20}$}  corresponding to eigenvectors of the estimated $T$-step controllability Gramian (see \methods\ for further details). The top plot of Fig.~\ref{fig:fMRI}(c) reports the error (normalized by the output dimension) in the final state of the two strategies, while the bottom plot shows the corresponding control energy (that is, the norm of the control input). In the plots, the targets are ordered from the most ($y_{\text{f},1}$) to the least ($y_{\text{f},20}$) controllable. The data-driven and the model-based inputs exhibit an almost identical behavior with reference to the most controllable targets. As we shift towards the least controllable targets, the data-driven strategy yields larger errors but, at the same time, requires less energy to be implemented, thus being potentially more feasible in practice. Importantly, since the underlying brain dynamics are not known, errors in the final state are computed using the identified linear dynamical model. It is thus expected that data-driven inputs yield larger errors in the final state than model-based inputs, although these errors may not correspond to control inaccuracies when applying the data-driven inputs to the actual brain dynamics. Ultimately, our numerical study suggests that the data-driven framework could represent a viable alternative to the classic model-based approach (e.g., see \cite{SG-FP-MC-QKT-BYA-AEK-JDM-JMV-MBM-STG-DSB:15,SD-SG:20,TM-GB-DSB-FP:19b}) to infer controllability properties of brain networks, and (by suitably modulating the reconstructed inputs) enforce desired functional configurations in a non-invasive manner and without requiring real-time measurements.

\section{Discussion}
In this paper we present a framework to control complex dynamical networks from data generated by non-optimal (and possibly random) experiments. We show that optimal point-to-point controls to reach a desired target state, including the widely used minimum-energy control input, can be determined exactly from data. We provide closed-form and approximate data-based expressions of these control inputs and characterize the minimum number of samples needed to compute them. Further, we show by means of numerical simulations that data-driven inputs are more accurate and computationally more efficient than model-based ones, and can be used to analyze and manipulate the controllability properties of~real~networks.

More generally, our framework and results suggest that many network control problems may be solved by simply relying on experimental data, thus promoting a new, exciting, and practical line of research in the field of complex networks. Because of the abundance of data in modern applications and the computationally appealing properties of data-driven controls, we expect that this new line of research will benefit a broad range of research communities, spanning from engineering to biology, which employ control-theoretic methods and tools to comprehend and manipulate complex networked phenomena.

Some limitations of this study should also be acknowledged and discussed. First, in our work we consider networks governed by linear dynamics. On the one hand, this is a restrictive assumption since many real-world networks are inherently nonlinear. On the other hand, linear models are used successfully to approximate the behavior of nonlinear dynamical networks around desired operating points, and capture more explicitly the impact of the network topology. Second, in many cases a closed-loop control strategy is preferable than a point-to-point one, especially if the control objective is to stabilize an equilibrium when external disturbances corrupt the dynamics. However, we stress that point-to-point controls, in addition to being able to steer the network to arbitrary configurations, are extensively used to characterize the fundamental control properties and limitations in networks of dynamical nodes. For instance, the expressions we provide for point-to-point control can also lead to novel methods to study the energetic limitations of controlling complex networks \cite{FP-SZ-FB:13q}, select sensors and actuators for optimized estimation and control \cite{THS-FLC-JL:16}, and design optimized network structures \cite{SZ-FP:16a}. Finally, although we provide data-driven expressions that compensate for the effect of noise in the limit of infinite data, we do not provide non-asymptotic guarantees on the reconstruction error. Overcoming these limitations represents a compelling direction of future work, which can strengthen the relevance and applicability of our data-driven control framework, and ultimately lead to viable control methods for complex networks.

\section{Methods}

\subsection{Model-based expressions of optimal controls} 
The model-based solution to \eqref{eq:optimal_control} can be written in batch form as
\begin{linenomath*} 
\begin{equation}\label{eq:mb-optimal-control}
	u^{\star}_{0:T-1} = (I - K_{\mc C_{T}}(MK_{\mc C_{T}})^{\dagger}M)\mc C_{T}^{\dagger}\subscr{y}{f},
\end{equation}
\end{linenomath*}
where $\mc C_{T}=[CB \ CAB \ \cdots\ CA^{T-1}B]$ is the $T$-step output controllability matrix of the dynamical network in \eqref{eq:sys}, \smash{$K_{Y_{T}}$} denotes a basis of the kernel of $\mc C_{T}$, and $M$ is any matrix satisfying $M^{\transpose}M=\mc H_{T}^{\transpose}Q\mc H_{T} + R$,~with
\begin{linenomath*} 
\begin{align*}
	\mc H_{T} = \begin{bmatrix} 
		0 & \cdots & \cdots  & 0 & CB  \\
		\vdots & \cdots & 0 & CB & CAB  \\
		\vdots & \iddots & \iddots & \iddots & \vdots \\
		0 & CB & CAB & \cdots & CA^{T-2}B \\
		\end{bmatrix},
\end{align*}
\end{linenomath*}
and $0$ entries denoting $p\times m$ zero matrices. If $Q=0$ and $R=I$ (minimum-energy control input), \eqref{eq:mb-optimal-control} simplifies to \smash{$u^{\star}_{0:T-1} = \mc C_{T}^{\dagger}\subscr{y}{f}$}. Alternatively, if the network is target controllable, the minimum-energy input can be compactly written as
\begin{linenomath*} 
\begin{equation}\label{eq:mb-min-energy}
	u^{\star}(t) = B^{\transpose}A^{T-t-1}C^{\transpose} \mc W_{T}^{-1} \subscr{y}{f}, \quad t=0,1,2,\dots,T-1.
\end{equation}
\end{linenomath*}
where $\mc W_{T}$ denotes the $T$-step output controllability Gramian of the dynamical network in \eqref{eq:sys} 
\begin{linenomath*} 
\begin{equation}
\mc W_{T} = \mc C_{T} \mc C_{T}^{\transpose} =\sum_{t=0}^{T-1} CA^{t}BB^{\transpose}(A^{\transpose})^{t}C^{\transpose},
\end{equation}
which is invertible if and only if the network is target controllable.
\end{linenomath*}
Equation \eqref{eq:mb-min-energy} is the classic (Gramian-based) expression of the minimum-energy control input \cite{TK:80}. It is well-known that this expression is numerically unstable, even for moderate size systems, e.g.,~see~\cite{JS-AEM:13}.

\subsection{Subspace-based system identification} Given the data matrices $U_{0:T-1}$ and $Y_{T}$ as defined in \eqref{eq:data} and assuming that $C=I$, a simple deterministic subspace-based procedure \cite[Ch.~6]{TK:05} to estimate the matrices $A$ and $B$ from the available data consists of the following two steps:
\begin{enumerate}
\item Compute an estimate of the $T$-step controllability matrix of the network as the solution of the minimization problem
\begin{linenomath*} 
  \begin{equation} \label{eq:C-est}
    \hat{\mc C}_{T} = \arg\min_{\mc C_{T}} \left\|Y_{T}-{\mc C_{T}}U_{0:T-1}\right\|^{2}_{\text{F}},
  \end{equation}
where $\|\cdot\|_{\text{F}}$ denotes the Frobenius norm of a matrix. The solution to \eqref{eq:C-est} has the form
  \smash{$\hat{\mc C}_{T} = Y_{T}U_{0:T-1}^{\dagger}$}.
\end{linenomath*}
\item In view of the definition of the controllability matrix, obtain an estimate of the matrix $B$ by extracting the first $m$ columns of
  \smash{$\hat{\mc C}_{T}$}. Namely,
  \smash{$\hat B = [\hat{\mc C}_{T,}]_{:,1:m}$}, where $[X]_{:,i:j}$ indicates the sub-matrix of $X$ obtained from keeping the entries
  from the $i$-th to $j$-th columns and all of its rows. An estimate of the matrix $A$ can be obtained as the solution to~the~least-squares~problem
\begin{linenomath*} 
\begin{equation*}
	\hat A = \arg\min_{A} \left\| [\hat{\mc C}_{T}]_{:,m+1:mT} - A \,[\hat{\mc C}_{T}]_{:,1:(T-1)m}\right\|^{2}_{\text{F}},
\end{equation*}
\end{linenomath*}
which yields the matrix $\hat A=[\hat{\mc C}_{T}]_{:,m+1:mT}[\hat{\mc C}_{T}]_{:,1:(T-1)m}^{\dagger}$.
\end{enumerate}
If the data are noiseless, the system is controllable in $T-1$ steps, and $U_{0:T-1}$ has full row rank, then this procedure provably returns correct estimates~of~$A$~and~$B$ \cite{TK:05}.

\subsection{Power-grid network dynamics, parameters, and data generation} 

The short-term electromechanical behavior of generators $\{2,\dots,10\}$ of the New England power-grid network are modeled by the swing equations \cite{PK:94}:
\begin{equation}\label{eq:swing}
\begin{aligned}
\dot \delta_{i} &= \omega_{i},\\
\frac{H_{i}}{\pi f_{b}}\dot \omega_{i}  & =  -D_{i}\omega_{i} + P_{\text{m}i} -G_{ii}E_{i}^{2}+\sum_{j=1,j\ne i}^{10} E_{i}E_{j}\left(G_{ij}\cos(\delta_{i}-\delta_{j})+B_{ij}\sin(\delta_{i}-\delta_{j})\right).
\end{aligned}
\end{equation}
where $\delta_{i}$ is the angular position or phase of the rotor in generator $i$  with respect to generator $1$, and where $\omega_{i}$  is the deviation of the rotor speed or frequency in generator $i$ relative to the nominal angular frequency $2\pi f_{b}$. The generator $1$ is assumed to be connected to an infinite bus, and has constant phase and frequency. The parameters $H_{i}$ and $D_{i}$ are the  inertia constant and damping coefficient, respectively, of generator $i$. The parameter $G_{ii}$ is the internal conductance of generator $i$, and $G_{ij} + \mathrm{i}B_{ij}$ (where $\mathrm{i}$ is the imaginary unit) is the transfer impedance between generators $i$ and $j$. The parameter $P_{\text{m}i}$ denotes the mechanical input power of generator $i$ and $E_{i}$ denotes the internal voltage of generator $i$. The values of parameters $f_{b}$, $H_{i}$, $D_{i}$, $G_{ij}$, $B_{ij}$, and $P_{\text{m}i}$ in the non-faulty and faulty configuration are taken from  \cite{YS-IM-TH:11}, while the voltages $E_{i}$ and initial conditions ($\delta_{i} (0)$, $\omega_{i} (0) = 0$) are fixed using a power flow computation. In our numerical study, we discretize the dynamics \eqref{eq:swing} using a forward Euler method with sampling time $T_{s}=2.5\times 10^{-4}$s. Data are generated by applying a Gaussian i.i.d.~perturbation with zero mean and variance $0.01$ to each frequency $\omega_{i}$. The initial condition of each experiment is computed by adding a Gaussian i.i.d.~perturbation with zero mean and variance $0.01$ to the steady-state values of $\delta_{i}$ and  $\omega_{i}$ of the swing dynamics \eqref{eq:swing}. 

\subsection{Task-fMRI dataset, pre-processing pipeline, and identification setup} The motor task fMRI data used in our numerical study are extracted from the HCP S1200 release \cite{DCVE-et-al:13,HCP}. The details for data acquisition and experiment design can be found in \cite{HCP}. The BOLD measurements have been pre-processed according to the minimal pipeline described in \cite{MFG-et-al:13}, and, as in \cite{CO-DSB-VMP:18}, filtered with a band-pass filter to attenuate the frequencies outside the 0.06--0.12 Hz band. Further, as common practice, the effect of the physiological signals (cardiac, respiratory, and head motion signals) is removed from the BOLD measurements by means of the regression procedure in \cite{CO-DSB-VMP:18}. The data matrices in \eqref{eq:data} are generated via a sliding window of fixed length $T=100$ with initial time in the interval $[-90,10]$. We assume that the inputs and states are zero for times less than or equal to 10, i.e., the instant at which the first task condition is issued. We approximate the input-output dynamics with a linear model with state dimension $n=20$ computed using input-output data in the interval $[0,150]$ and the identification procedure~detailed~in~\cite{CO-DSB-VMP:18}.  When the estimated network matrix $A$ has unstable eigenvalues, we stabilize $A$ by diving it by $\rho(A)+0.01$, where $\rho(A)$ denotes the spectral radius of $A$. Other identification~parameters~are~as~in~\cite{CO-DSB-VMP:18}.

\subsection{Computational details} 
All numerical simulations have been performed via standard linear-algebra LAPACK routines available as built-in functions in
Matlab\textsuperscript{\textregistered} R2019b, running on a 2.6 GHz Intel Core i5 processor with 8 GB of RAM. In particular, for the
computation of pseudoinverses we use the singular value decomposition method (command \texttt{pinv} in Matlab\textsuperscript{\textregistered}) with a threshold~of~$10^{-8}$.

\subsection{Materials and data availability} 
Data were provided (in part) by the Human Connectome Project, WU-Minn Consortium (Principal Investigators: David Van Essen and Kamil Ugurbil; 1U54MH091657) funded by the 16 NIH Institutes and Centers that support the NIH Blueprint for Neuroscience Research; and by the McDonnell Center for Systems Neuroscience at Washington University. The code and data used in this study are freely available in the public GitHub repository: \href{https://github.com/baggiogi/data_driven_control}{\texttt{https://github.com/baggiogi/data\_driven\_control}}.

\section*{Acknowledgements}
This work was supported in part by awards AFOSR-FA9550-19-1-0235 and ARO-71603NSYIP, and by MIUR (Italian Minister for Education)  under the initiative ``Departments of Excellence'' (Law 232/2016).

\section*{Author contributions}
G.B, D.S.B, and F.P. contributed to the conceptual and theoretical aspects of the study, wrote the manuscript and the \SI. G.B. carried out the numerical studies and prepared the figures.

\section*{Competing interests}
The authors declare no competing interests.

\bibliography{alias,FP,Main,New}

 \nolinenumbers
 
\printfigures

\end{document}


\bibliographystyle{naturemag}
\title{Supplement to ``Data-Driven Control of Complex Networks''}
\author{Giacomo Baggio}
\affiliation{Department of Information Engineering, University of Padova, Padova, Italy}
\author{Danielle S.~Bassett}
\affiliation{Departments of Bioengineering, Physics \& Astronomy, Electrical
  \& Systems Engineering, 
  Neurology, and Psychiatry, University of Pennsylvania, Philadelphia, USA\\ Santa Fe Institute, Santa Fe, USA}
\author{Fabio Pasqualetti}
\affiliation{Department of Mechanical Engineering, University of
  California at Riverside, Riverside, USA\\ To whom correspondence should be addressed: \href{mailto:fabiopas@engr.ucr.edu}{fabiopas@engr.ucr.edu}}

\date{\today}

\maketitle

\tableofcontents

\newpage


\section{Expression of optimal data-driven controls for arbitrary $Q\succeq 0$ and $R\succ 0$}\label{sec:optimal-control-derivation}
Consider the data matrices $U_{0:T-1}$, $Y_{1:T-1}$, $Y_{T}$ as defined in Eq.~(3) of the main text. By linearity of the system, a linear combination of the input experiments (columns of $U_{0:T-1}$) yields an output that is a linear combination (with the same coefficients) of the corresponding output data; that is, for any vector $\alpha\in\mathbb{R}^{N}$, the input $u_{0:T}=U_{0:T}\alpha$ generates the outputs $y_{1:t-1}=Y_{1:T-1}\alpha$ and $y_{T}=Y_{T}\alpha$. Assume that there exists a vector $\alpha^{\star}$ such that 
\begin{linenomath*}
\begin{equation} \label{eq:alpha_star}
u^{\star}_{0:T-1}=U_{0:T-1}\alpha^{\star}
\end{equation}
\end{linenomath*}
is the optimal (w.r.t.~the cost function in Eq.~(2) of the main text) control reaching $\subscr{y}{f}$ in $T$ steps. Then, $\alpha^{\star}$ satisfies
\begin{linenomath*}
\begin{equation}\label{eq:optimal_control} \begin{aligned}
	\alpha^{\star} = \arg\min_{\alpha}\ \ &  \left\|L\alpha\right\|^{2}_{2} \\
	&\quad  \text{s.t. }  \subscr{y}{f}=Y_T \alpha .
\end{aligned}
\end{equation}
\end{linenomath*}
where $L$ is any matrix satisfying $L^{\top}L=Y_{1:T-1}^{\transpose}\, Q\, Y_{1:T-1}+U_{0:T-1}^{\transpose}\, R\, U_{0:T-1}$\footnote{Notice that such a matrix $L$ always exists since $Q\succeq 0$ and $R\succ 0$ imply that $Y_{1:T-1}^{\transpose}\, Q\, Y_{1:T-1}+U_{0:T-1}^{\transpose}\, R\, U_{0:T-1}\succeq 0$.} and $\|\cdot\|_{2}^{2}$ denotes the $\ell^{2}$-norm of a vector.
From the constraint $\subscr{y}{f}=Y_T\alpha$, the optimal $\alpha^{*}$ has the form $\alpha^{\star} = Y_{T}^{\dag}\subscr{y}{f}+K_{Y_{T}}w^{\star}$, with $w^{\star}$ being any vector such that
\begin{linenomath*}
\begin{equation} \begin{aligned}
	w^{\star} &= \arg\min_{w}\ \   \left\|L(Y_{T}^{\dag}\subscr{y}{f}+K_{Y_{T}}w)\right\|_{2}^{2},
\end{aligned}
\end{equation}
\end{linenomath*}
where $K_{Y_{T}}$ denotes a matrix whose columns form a basis of the kernel of $Y_{T}$. 
The solutions to the latter problem are of the form $w^{\star}=-(LK_{Y_{T}})^{\dag} LY_{T}^{\dag}\subscr{y}{f} + K_{LY_{T}}v$, where $K_{LY_{T}}$ is a matrix whose columns form a basis of $\Ker(LY_{T})$ and $v$ is an arbitrary vector. Since $\Ker(LY_{T})\subseteq\Ker(Y_{T})$, the optimal vector $\alpha^{\star}$ is unique and of the form $\alpha^{\star} = (I-K_{Y_{T}}(LK_{Y_{T}})^{\dag} L)Y_{T}^{\dag}\subscr{y}{f}$. This implies that the (unique) optimal control $u^{\star}_{0:T-1}$ to reach $\subscr{y}{f}$ in $T$ steps (that is, \eqref{eq:alpha_star}) can be~written~as 
\begin{linenomath*}
\begin{equation} \begin{aligned}\label{eq:dd}
	u_{0:T-1}^{\star} = U_{0:T-1}\alpha^{\star} = U_{0:T-1}(I-K_{Y_{T}}(LK_{Y_{T}})^{\dag} L)Y_{T}^{\dag}\subscr{y}{f}.
\end{aligned}
\end{equation}
\end{linenomath*}

\section{Minimum number of data to reconstruct $u_{0:T-1}^{\star}$} If the columns of $U_{0:T-1}$ span the space of all possible input sequences $\mathbb{R}^{mT}$, that is  $U_{0:T-1}$ is full-row rank, then there always exists a vector $\alpha^{\star}$ satisfying $u^{\star}_{0:T-1}=U_{0:T-1}\alpha^{\star}$, for any $\subscr{y}{f}$. Hence, $mT$ linearly independent data are sufficient to reconstruct the optimal control via \eqref{eq:dd}. In contrast, since any linear combination of optimal control inputs is still an optimal control input (for an output $\subscr{y}{f}$ which is the linear combination of the ones reached by the available optimal inputs), if $U_{0:T-1}$ contains $p$ linearly independent and optimal data, then \eqref{eq:dd} yields the optimal control input, for any target $\subscr{y}{f}$. Thus, at least $p$ linearly independent data must be collected to reconstruct the optimal control input via \eqref{eq:dd}. Finally, we observe that if $p$ linearly independent (and possibly non-optimal) data are available, then $Y_{T}$ has full row rank and the optimal vector $\alpha^{\star}$ in \eqref{eq:optimal_control} satisfies the constraint $\subscr{y}{f}=Y_T\alpha^{\star}$, so that the resulting control in \eqref{eq:dd}, although in general suboptimal, still correctly steers the output to the desired target $\subscr{y}{f}$, for any choice of $\subscr{y}{f}$.

\section{Extension to general experimental settings} When the initial state of the network is different from $x(0)=0$ across the experiments, then the optimal control input can be still reconstructed from data, provided that the initial state of each experiment can be measured. In this case, data may consist of a single uninterrupted input/output trajectory of the system, where different control experiments correspond to different segments of length $T$ of this trajectory. 

Let \smash{$X_{0}=\begin{bmatrix} x_{0}^{(1)}& x_{0}^{(2)} & \cdots & x_{0}^{(N)}\end{bmatrix}\in\mathbb{R}^{n\times N}$} denote the matrix whose columns consist of the initial states of all control experiments, and $K_{X_{0}}$ denote a basis of $\Ker(X_{0})$. Note that an input sequence $u_{0:T-1}$ expressed as a linear combination of the columns of $U_{0:T-1}K_{X_{0}}$ yields an output trajectory $y_{1:T}$ which is a linear combination (with the same coefficients) of the columns of $Y_{1:T}K_{X_{0}}$. Indeed, for any $\alpha\in\mathbb{R}^{d}$, $d=\dim \Ker(X_{0})$, by linearity of the system, we can rewrite the output data $Y_{T}$ as a sum of a term depending only on $X_{0}$ (free response) and a term depending only on $U_{0:T-1}$ (forced response), so that
\begin{linenomath*}
\begin{equation}
Y_{1:T} K_{X_{0}}\alpha= (GX_{0} + H U_{0:T-1})K_{X_{0}}\alpha = H U_{0:T-1}K_{X_{0}}\alpha,
\end{equation}
\end{linenomath*}
 where $G$, $H$ are matrices of appropriate dimensions that depend on network matrices $A$, $B$, $C$. Thus, assuming that there exists a vector $\alpha^{\star}$ such that $u^{\star}_{0:T-1}=U_{0:T-1}K_{X_{0}}\alpha^{\star}$ is the optimal control reaching $\subscr{y}{f}$ in $T$ steps, it holds that
\begin{linenomath*}
\begin{equation}\begin{aligned}
	\alpha^{\star} = \arg\min_{\alpha}\ \ &  \left\|LK_{X_{0}}\alpha\right\|^{2}_{2} \\
	&\quad  \text{s.t. }  \subscr{y}{f}=Y_TK_{X_{0}} \alpha.
\end{aligned}
\end{equation}
\end{linenomath*}
Hence, along the same lines as before, the optimal control input $u^{\star}_{0:T-1}$ to reach $\subscr{y}{f}$ in $T$ steps is given by 
\begin{linenomath*}
\begin{equation} \begin{aligned}\label{eq:dd-x0}
	u_{0:T-1}^{\star} = U_{0:T-1}K_{X_{0}}(I-K_{Y_{T}K_{X_{0}}}(LK_{Y_{T}K_{X_{0}}})^{\dag} L)(Y_{T}K_{X_{0}})^{\dag}\subscr{y}{f},
\end{aligned}
\end{equation}
\end{linenomath*}
where $K_{Y_{T}K_{X_{0}}}$ is a matrix whose columns form a basis of $\Ker(Y_{T}K_{X_{0}})$. Notice that, if $U_{0:T-1}K_{X_{0}}$ is full row rank, then there always exists a vector $\alpha^{\star}$ such that $u^{\star}_{0:T-1}=U_{0:T-1}K_{X_{0}}\alpha^{\star}$, for any target $\subscr{y}{f}$. A sufficient (yet not necessary) condition for $U_{0:T-1}K_{X_{0}}$ to be full row rank is that \smash{$[U_{0:T-1}^{\transpose} \ X_{0}^{\transpose}]^{\transpose}$} is full row rank.\footnote{Indeed, if \smash{$[U_{0:T-1}^{\transpose} \ X_{0}^{\transpose}]^{\transpose}$} is full row rank, for all $u\in\real^{mT}$ there exists $\gamma\in\Ker(X_{0})$ such that \smash{$[u^{\transpose}\ 0^{\transpose}]^{\transpose} =[U_{0:T-1}^{\transpose} \ X_{0}^{\transpose}]^{\transpose}\gamma$}, which implies that $U_{0:T-1}K_{X_{0}}$ must be of full row~rank.} This in turn implies that (at most) $mT+n$ linearly independent experiments (w.r.t.~to inputs and initial states) suffice to reconstruct the optimal control~via~\eqref{eq:dd-x0}.


\section{Closed-form expression for $Q=0$ and $R=I$ (minimum-energy control input)} For $Q=0$ and $R=I$, we have that $L=I$ and the data-driven expression in \eqref{eq:dd} becomes
\begin{linenomath*}
\begin{equation} \begin{aligned}\label{eq:dd-min-energy}
	u_{0:T-1}^{\star}  = (I-U_{0:T-1}K_{Y_{T}}(U_{0:T-1}K_{Y_{T}})^{\dag})U_{0:T-1}Y_{T}^{\dag}\subscr{y}{f}.
\end{aligned}
\end{equation}
\end{linenomath*}
\eqref{eq:dd-min-energy} equals the minimum-energy control input to reach $\subscr{y}{f}$ in $T$ steps \cite{TK:80}, if $mT$ linearly independent input experiments are collected. \eqref{eq:dd-min-energy} can be further simplified by exploiting the following instrumental result.
\begin{lemma}\label{lemma:pinv-min-energy}
Let $A\in\mathbb{R}^{r \times n}$, $B\in\mathbb{R}^{q \times n}$. If $\text{Ker}(A) \subseteq \text{Ker}(B)$, then $(I-AK_{B}(AK_{B})^{\dag})AB^{\dag}=(BA^{\dag})^{\dag}$, where $K_{B}$ is a matrix whose columns are a basis of $\Ker(B)$.
\end{lemma}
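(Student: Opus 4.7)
The plan is to identify $(BA^\dag)^\dag$ via the standard characterization that, for any matrix $M$, the Moore--Penrose pseudoinverse $M^\dag$ is the unique linear map satisfying both (a) $MM^\dag$ equals the orthogonal projection onto $\Image M$, and (b) $\Image M^\dag \subseteq \Image M^{\transpose}$. Setting $M:=BA^\dag$ and $X:=(I-AK_B(AK_B)^\dag)AB^\dag$, the strategy is then to verify (a) and (b) for $X$, after which uniqueness of the pseudoinverse yields the claim.

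The first step is to translate the hypothesis $\Ker(A)\subseteq\Ker(B)$ into the equivalent algebraic identity $B=BA^\dag A$ (equivalently, $\Image B^{\transpose}\subseteq \Image A^{\transpose}$). Noting that $P:=AK_B(AK_B)^\dag$ is the orthogonal projection onto $\Image(AK_B)=A\,\Ker(B)$, property (a) reduces to the observation $BA^\dag P=0$: for any $v\in\Ker(B)$, one has $BA^\dag(Av)=Bv=0$ by the reduction. Consequently $MX=BA^\dag(I-P)AB^\dag=BA^\dag A B^\dag=BB^\dag$, which is precisely the orthogonal projection onto $\Image B=\Image M$.

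For property (b), the plan is to compute $\Ker M$ explicitly. Using the decomposition $\Ker(B)=\Ker(A)\oplus(\Ker(B)\cap\Image A^{\transpose})$ (valid since $\Ker(A)\subseteq\Ker(B)$) together with the fact that $A$ restricts to an isomorphism from $\Image A^{\transpose}$ onto $\Image A$, a direct check yields $\Ker M=(\Image A)^\perp\oplus A\,\Ker(B)$, which is an orthogonal sum because $A\,\Ker(B)\subseteq\Image A$. Hence $\Image M^{\transpose}=(\Ker M)^\perp=\Image A\cap(A\,\Ker(B))^\perp$. Since $B^\dag y\in\Image B^{\transpose}\subseteq\Image A^{\transpose}$, we have $AB^\dag y\in\Image A$, and subtracting its component in $A\,\Ker(B)$ (which is exactly what $I-P$ does) keeps the result in $\Image A$ while making it orthogonal to $A\,\Ker(B)$. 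Therefore $Xy\in\Image M^{\transpose}$, closing property (b) and giving $X=M^\dag$.

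The main obstacle is establishing (b): it requires turning the abstract inclusion $\Ker(A)\subseteq\Ker(B)$ into the explicit orthogonal decomposition of $\Ker(BA^\dag)$, which in turn relies on the direct-sum splitting of $\Ker(B)$ above. An alternative route is to verify the four Moore--Penrose axioms for $X$ directly; given $MX=BB^\dag$, axioms (i), (ii), (iii) follow cleanly from $BB^\dag B=B$ and $B^\dag BB^\dag=B^\dag$, but symmetry of $XM$ (axiom (iv)) is not immediate from algebraic manipulation of $(I-P)AB^\dag B A^\dag$ and seems to require essentially the same subspace analysis, which is why the characterization-based route is cleaner.
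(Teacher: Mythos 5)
Your proof is correct, but it follows a genuinely different route from the paper's. The paper verifies the four Penrose axioms for $X:=(I-AK_B(AK_B)^\dag)AB^\dag$ head-on, by first establishing four algebraic identities ($\Pi AB^\dag B=\Pi A$, $BA^\dag A=B$, $BA^\dag(I-\Pi)=0$, and the commutation $AA^\dag\Pi=\Pi AA^\dag$, with $\Pi=I-AK_B(AK_B)^\dag$) and then chaining them through each axiom; the delicate point there is the symmetry of $XM$, handled via the commutation identity. You instead invoke the two-condition characterization of $M^\dag$ (that $MX$ is the orthogonal projector onto $\Image M$ together with $\Image X\subseteq\Image M^{\transpose}$ pins down $X=M^\dag$ uniquely, since any difference of two such $X$ has columns in $\Ker M\cap(\Ker M)^\perp=\{0\}$), which replaces the symmetry-of-$XM$ verification by an explicit computation of $\Ker(BA^\dag)=(\Image A)^\perp\oplus A\,\Ker(B)$ via the splitting $\Ker(B)=\Ker(A)\oplus(\Ker(B)\cap\Image A^{\transpose})$. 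Both arguments are sound; the paper's is purely algebraic and cites only the standard four-condition definition, while yours is geometrically more transparent but leans on a characterization you should state and prove (it is a one-line uniqueness argument, but it is doing real work). One small step you use implicitly in property (a): the identity $\Image(BA^\dag)=\Image B$, which is needed to say that $BB^\dag$ is the projector onto $\Image M$; it follows immediately from $B=BA^\dag A$, but it deserves a sentence.
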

\begin{proof} We show that $(I-AK_{B}(AK_{B})^{\dag})AB^{\dag}$ satisfies the four conditions \cite{AB-TNEG:03} defining the Moore--Penrose pseudoinverse of $BA^{\dag}$. To this aim, let $\Pi := I - AK_{B} (AK_{B})^\dag$. By noticing that $\Pi=\Pi^{\transpose}$ is the orthogonal projection onto $\text{Ker}((AK_{B})^{\transpose})$, we have
\begin{linenomath*}
  \begin{equation}\label{eq:eq_pf_prop_1}
    (AK_{B})^{\transpose} \Pi = 0 \implies \Pi AK_{B} = 0 \implies \Pi AB^{\dag} B = \Pi A.
  \end{equation}
\end{linenomath*}
  By assumption $\text{Ker}(A) \subseteq \text{Ker}(B)$, which implies
\begin{linenomath*}
  \begin{equation}\label{eq:eq_pf_prop_2}
    B (I-A^{\dag} A) = 0 \implies B A^{\dag} A = B,
  \end{equation}
\end{linenomath*}
  since $I-A^{\dag} A$ is the orthogonal projection onto $\text{Ker}(A)$.
  Further, since $BK_{B} = 0$, we have
\begin{linenomath*}
  \begin{equation}\label{eq:eq_pf_prop_3}
    B A^{\dag} (I-\Pi)  = B A^{\dag} AK_{B} (AK_{B})^{\dag} \overset{\eqref{eq:eq_pf_prop_2}}{=} BK_{B} (AK_{B})^{\dag} = 0.
  \end{equation}
\end{linenomath*}
  Finally, since $I-A A^{\dag}$ equals the orthogonal projection onto $\text{Ker}(A^{\transpose})$, and $I-\Pi=AK_{B} (AK_{B})^\dag$ equals the orthogonal projection onto $\text{Im}(AK_{B}) \subseteq \text{Im}(A) \perp \text{Ker}(A^{\transpose})$, where $\text{Im}(\cdot)$ denotes the image or column space of a matrix, we have
\begin{linenomath*}
  \begin{equation} \label{eq:eq_pf_prop_4}
(I-\Pi)(I-A A^{\dag}) = [(I-\Pi)(I-A A^{\dag})]^{\transpose} =0 \implies A A^{\dag} \Pi = \Pi A A^{\dag},
  \end{equation}
\end{linenomath*}
  where the last implication follows because $I-\Pi$ and
  $I-A A^{\dag}$ are symmetric. To conclude, we show that
  $\Pi AB^{\dag} = (BA^{\dag})^\dag$ by proving the four
    Moore--Penrose conditions \cite{AB-TNEG:03}: 
  \begin{enumerate} 
  \item $\Pi AB^{\dag} B A^{\dag} \Pi AB^{\dag} 
    \overset{\eqref{eq:eq_pf_prop_1}}{=}  \Pi A A^{\dag} \Pi AB^{\dag}
    \overset{\eqref{eq:eq_pf_prop_4}}{=} \Pi ^2 A  A^{\dag} A B^{\dag} =
    \Pi AB^{\dag}$; 

  \item
    $BA^{\dag} PAB^{\dag} B A^{\dag}
    \overset{\eqref{eq:eq_pf_prop_1}}{=} B A^{\dag} \Pi A A^{\dag} = B
    A^{\dag} A A^{\dag} - B A^{\dag} (I-\Pi )A A^{\dag}
    \overset{\eqref{eq:eq_pf_prop_3}}{=} B A^{\dag}$;

  \item
    $B A^{\dag} \Pi AB^{\dag} = B A^{\dag} AB^{\dag} - B A^{\dag}(I-\Pi )AB^{\dag} \hspace*{2pt}\overset{\eqref{eq:eq_pf_prop_2},\, 
      \eqref{eq:eq_pf_prop_3}}{=} BB^{\dag} = (BB^{\dag})^{\transpose}$;

  \item
    $\Pi AB^{\dag} B A^{\dag} \overset{\eqref{eq:eq_pf_prop_1}}{=} \Pi A
    A^{\dag} \overset{\eqref{eq:eq_pf_prop_4}}{=} A A^{\dag} \Pi = (\Pi A
    A^{\dag})^{\transpose}$.
  \end{enumerate}
  This concludes the proof.
\end{proof}

Since $Y_{T}=\mc C_{T}U_{0:T-1}$, where $\mc C_{T}=[CB \ CAB \ \cdots\ CA^{T-1}B]$ is the $T$-steps output controllability matrix of the network, it holds that $\text{Ker}(U_{0:T-1}) \subseteq \text{Ker}(Y_{T})$. Thus, by Lemma \ref{lemma:pinv-min-energy}, \eqref{eq:dd-x0} can be compactly rewritten as
\begin{linenomath*}
\begin{equation} \begin{aligned}\label{eq:dd-min-energy-compact}
	u_{0:T-1}^{\star} = (Y_{T}U_{0:T-1}^{\dag})^{\dag}\subscr{y}{f}.
\end{aligned}
\end{equation}
\end{linenomath*}
When the optimal input can be reconstructed from the available data, \eqref{eq:dd-min-energy-compact} could also be derived by ``direct'' estimation of the output controllability matrix $\mc C_{T}$. However, we remark that, based on Lemma \ref{lemma:pinv-min-energy}, the data-driven expressions in \eqref{eq:dd-x0} and \eqref{eq:dd-min-energy-compact} are equivalent even when the optimal input cannot be reconstructed from the available data.

\section{Approximate data-driven minimum-energy controls} Consider the data-driven control input 
\begin{linenomath*}
\begin{equation} \begin{aligned}\label{eq:dd-min-energy-approx}
	\hat u_{0:T-1}  = U_{0:T-1}Y_{T}^{\dag}\subscr{y}{f}.
\end{aligned}
\end{equation}
\end{linenomath*}
Notice that $\hat u_{0:T-1}$ correctly steers the network to $\subscr{y}{f}$ in $T$ steps, as long as $p$ linearly independent experiments are available. Indeed, if $Y_{T}$ is full row rank, there exists $\bar\alpha\in\mathbb{R}^{N}$ satisfying $\subscr{y}{f}=Y_{T}\bar \alpha$, for all $\subscr{y}{f}$, so that $\hat u_{0:T-1}= U_{0:T-1}\bar\alpha=U_{0:T-1}Y_{T}^{\dag}\subscr{y}{f}$ drives the network to $\subscr{y}{f}$. Although $\hat u_{0:T-1}$ does not typically coincide with the minimum-energy control, when the input experiments are generated randomly and independently from a Gaussian distribution, $\hat u_{0:T-1}$ approaches the minimum-energy control as the number of experiments grows, as we show next. To this end we need the following standard result in non-asymptotic random matrix theory, e.g., see \cite[Corollary 5.35 and Lemma 5.36]{RV:12}. Given a matrix $X\in\mathbb{R}^{n\times m}$, $\sigma_{\min}(A)$, $\sigma_{\max}(A)$, and $\kappa(A):=\sigma_{\max}(A)/\sigma_{\min}(A)$ denote the largest, smallest (non-zero) singular value, and the condition number of $A$, respectively.

\begin{lemma}\label{lemma:non-asym-bound}
Let $X\in\mathbb{R}^{N\times q}$ have i.i.d.~normally distributed entries. Then, with probability at least $1-\delta$
\begin{linenomath*}
\begin{eqnarray*}
	\left\|\frac{1}{N}X^{\transpose}X-I\right\|_{2} &\le& 3\max(\eta,\eta^{2}),\\
	1-\eta\le\sigma_{\min}\left(\frac{1}{\sqrt{N}}X\right)&\le&\sigma_{\max}\left(\frac{1}{\sqrt{N}}X\right)\le1+\eta,
\end{eqnarray*}
where $\eta:=\sqrt{q/N}+\sqrt{2\ln(1/\delta)/N}$ and $\|X\|_{2}=\sigma_{\text{max}}(X)$ denotes the spectral norm of matrix $X$.
\end{linenomath*}
\end{lemma}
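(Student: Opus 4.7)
My plan is to derive the lemma via the two classical ingredients used in Vershynin's treatment: (i) Gordon's comparison inequality for the expectation of the extreme singular values of a Gaussian matrix, and (ii) Gaussian concentration of Lipschitz functions. The second displayed inequality will be established first, and the first displayed inequality will then follow as an easy algebraic consequence.

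First I would prove the singular-value sandwich. Gordon's theorem gives
\begin{linenomath*}
\begin{equation*}
\sqrt{N}-\sqrt{q}\;\le\;\E\,\sigma_{\min}(X)\;\le\;\E\,\sigma_{\max}(X)\;\le\;\sqrt{N}+\sqrt{q}.
\end{equation*}
\end{linenomath*}
Since the maps $X\mapsto\sigma_{\min}(X)$ and $X\mapsto\sigma_{\max}(X)$ are both $1$-Lipschitz with respect to the Frobenius norm on $\real^{N\times q}$, the standard Gaussian concentration inequality yields, for every $t>0$,
\begin{linenomath*}
\begin{equation*}
\Pr\bigl(\sigma_{\max}(X)\ge \sqrt{N}+\sqrt{q}+t\bigr)\le e^{-t^{2}/2},\qquad \Pr\bigl(\sigma_{\min}(X)\le \sqrt{N}-\sqrt{q}-t\bigr)\le e^{-t^{2}/2}.
\end{equation*}
\end{linenomath*}
Setting $t=\sqrt{2\ln(1/\delta)}$ and taking a union bound over the two tail events, then dividing by $\sqrt{N}$, gives the second displayed inequality with probability at least $1-\delta$, after absorbing a factor of $2$ into $\delta$ (or by using Gordon's one-sided form if one prefers not to).

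The first displayed inequality is then a purely deterministic corollary of the singular-value sandwich. I would observe that the nonzero singular values of $(1/\sqrt{N})X$ are the square roots of the eigenvalues of $(1/N)X^{\transpose}X$, so
\begin{linenomath*}
\begin{equation*}
\Bigl\|\tfrac{1}{N}X^{\transpose}X-I\Bigr\|_{2}=\max_{i}\bigl|\sigma_{i}^{2}\bigl(\tfrac{1}{\sqrt{N}}X\bigr)-1\bigr|=\max_{i}\bigl|\sigma_{i}-1\bigr|\,\bigl|\sigma_{i}+1\bigr|.
\end{equation*}
\end{linenomath*}
On the event of the sandwich, $|\sigma_{i}-1|\le\eta$ and $\sigma_{i}+1\le 2+\eta$, so the product is at most $\eta(2+\eta)$. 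A quick case split gives $\eta(2+\eta)\le 3\eta$ when $\eta\le 1$ and $\eta(2+\eta)\le 3\eta^{2}$ when $\eta\ge 1$, whence $\|(1/N)X^{\transpose}X-I\|_{2}\le 3\max(\eta,\eta^{2})$ on the same event.

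The main obstacle is really just Gordon's inequality itself; everything after it is bookkeeping. Since Gordon's theorem is classical and the paper explicitly invokes it via the citation to Vershynin, I would quote it rather than reprove it, and the remaining work is the concentration step (a single application of Borell--TIS / Gaussian Lipschitz concentration) together with the elementary algebraic reduction above. The one place to be careful is the choice of which tail probability to control: if one wants a single $1-\delta$ bound rather than $1-2\delta$, one can either halve $\delta$ at the outset or, cleaner, apply Gordon's upper and lower bounds together with a single two-sided Lipschitz deviation inequality for the pair $(\sigma_{\min},\sigma_{\max})$.
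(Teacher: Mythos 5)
Your proof is correct and is essentially the argument the paper points to: the paper gives no proof of its own but cites Vershynin's Corollary 5.35 (Gordon's comparison plus Gaussian Lipschitz concentration for the singular-value sandwich) and Lemma 5.36 (the deterministic passage to $\|\frac{1}{N}X^{\transpose}X-I\|_{2}\le 3\max(\eta,\eta^{2})$ with the factor $3$ arising exactly as in your case split), which is what you reconstruct. The only caveat, which you already flag, is that the union bound over the two tails gives $1-2\delta$ rather than $1-\delta$ unless one takes $t=\sqrt{2\ln(2/\delta)}$; this imprecision is inherited from the lemma's statement itself.
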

\begin{theorem}\label{thm:dd-min-energy-approx}
Assume that the network is output controllable, $U_{0:T-1}$ has full row rank, and the entries of $U_{0:T-1}$ are i.i.d.~Gaussian random variables with zero mean and finite variance $\sigma^{2}\ne 0$. Then, with probability at least $1-\delta$
\begin{linenomath*}
\begin{equation}\label{eq:nonasym-approx}
	\|u_{0:T-1}^{\star}-\hat u_{0:T-1}\|_{2} \le \frac{3\max(\eta,\eta^{2})}{\sigma_{\min}(\mc C_{T})}\left(1+\frac{1+\eta}{1-\eta}\kappa^{2}(\mc C_{T})\right)\left\|\subscr{y}{f}\right\|_{2},
\end{equation}
\end{linenomath*}
where $u_{0:T-1}^{\star}$ is the minimum-energy control input driving the network to $\subscr{y}{f}$, the matrix $\mc C_{T}=[CB \ CAB \ \cdots\ CA^{T-1}B]$ is the $T$-steps output controllability matrix of the network, and $\eta:=\sqrt{mT/N}+\sqrt{2\ln(1/\delta)/N}$.
In particular, as $N\to \infty$, 
\begin{linenomath*}
\begin{equation}
	\hat u_{0:T-1} \xrightarrow{\text{a.s.}} u_{0:T-1}^{\star},
\end{equation}
\end{linenomath*}
where $\xrightarrow{\text{a.s.}}$ stands for almost sure convergence.
\end{theorem}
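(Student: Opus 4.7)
The plan is to (i) write both $u_{0:T-1}^{\star}$ and $\hat u_{0:T-1}$ in closed form using the identity $Y_T=\mc C_T U_{0:T-1}$; (ii) rewrite their difference as an algebraic expression in the perturbation $W-I$, where $W:=U_{0:T-1}U_{0:T-1}^{\transpose}/(N\sigma^{2})$ is the normalized sample Gram matrix of the random inputs; (iii) bound the resulting matrix norms using Lemma~\ref{lemma:non-asym-bound} applied to the standardized Gaussian matrix $U_{0:T-1}^{\transpose}/\sigma\in\real^{N\times mT}$; and finally (iv) pass to the almost-sure statement via a Borel--Cantelli argument.

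Under output controllability, $\mc C_T$ has full row rank, and the classical minimum-energy formula for $x(0)=0$ gives $u_{0:T-1}^{\star}=\mc C_T^{\dag}\subscr{y}{f}$. Since $U_{0:T-1}$ is full row rank by assumption, a direct manipulation of $U_{0:T-1}(\mc C_T U_{0:T-1})^{\dag}$ yields
\begin{equation*}
\hat u_{0:T-1}=U_{0:T-1}Y_T^{\dag}\subscr{y}{f}=W\mc C_T^{\transpose}(\mc C_T W\mc C_T^{\transpose})^{-1}\subscr{y}{f}.
\end{equation*}
Using the elementary identity $\mc C_T^{\dag}[\mc C_T W\mc C_T^{\transpose}](\mc C_T W\mc C_T^{\transpose})^{-1}=\mc C_T^{\dag}$, I would add and subtract $\mc C_T^{\dag}$ in the appropriate place to obtain the decomposition
\begin{equation*}
u_{0:T-1}^{\star}-\hat u_{0:T-1}=(I-W)\mc C_T^{\dag}\subscr{y}{f}+W\mc C_T^{\dag}\mc C_T(W-I)\mc C_T^{\transpose}(\mc C_T W\mc C_T^{\transpose})^{-1}\subscr{y}{f},
\end{equation*}
which isolates a ``first-order'' perturbation term controlled by $\|W-I\|_2\|\mc C_T^{\dag}\|_2$, and a ``higher-order'' correction that carries the conditioning of $\mc C_T$ through $\kappa(\mc C_T)$ together with the random perturbation of the Gram matrix $\mc C_T W\mc C_T^{\transpose}$.

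Applying Lemma~\ref{lemma:non-asym-bound} with $q=mT$ yields, with probability at least $1-\delta$, the joint bounds $\|W-I\|_2\le 3\max(\eta,\eta^{2})$, $\sigma_{\min}(W)\ge(1-\eta)^{2}$, and $\sigma_{\max}(W)\le(1+\eta)^{2}$. The second of these implies $\mc C_T W\mc C_T^{\transpose}\succeq(1-\eta)^{2}\mc C_T\mc C_T^{\transpose}$, hence $\|(\mc C_T W\mc C_T^{\transpose})^{-1}\|_2\le 1/[(1-\eta)^{2}\sigma_{\min}^{2}(\mc C_T)]$. Combining these estimates by sub-multiplicativity of the spectral norm together with $\|\mc C_T^{\dag}\mc C_T\|_2\le 1$ and $\|\mc C_T\|_2=\sigma_{\max}(\mc C_T)$, the first summand is bounded by $3\max(\eta,\eta^{2})\|\subscr{y}{f}\|_2/\sigma_{\min}(\mc C_T)$, while the second absorbs into an extra factor of order $\frac{1+\eta}{1-\eta}\kappa^{2}(\mc C_T)$ times this quantity, yielding~\eqref{eq:nonasym-approx}.

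For the almost-sure statement, specialize the non-asymptotic bound with the summable choice $\delta_N=1/N^{2}$, so that $\eta_N=\sqrt{mT/N}+\sqrt{4\ln N/N}\to 0$ as $N\to\infty$. The Borel--Cantelli lemma then guarantees that with probability one the inequality holds for all but finitely many $N$, and since its right-hand side vanishes in the limit, $\hat u_{0:T-1}\to u_{0:T-1}^{\star}$ almost surely. The main obstacle is purely algebraic bookkeeping: selecting the decomposition of $u^{\star}-\hat u$ so that sub-multiplicativity produces exactly the prefactor $\frac{1+\eta}{1-\eta}\kappa^{2}(\mc C_T)$ advertised in the statement rather than a looser $\frac{(1+\eta)^{2}}{(1-\eta)^{2}}$ ratio, and verifying that the first-order term $(I-W)\mc C_T^{\dag}\subscr{y}{f}$ does not incur a spurious $\kappa(\mc C_T)$ factor.
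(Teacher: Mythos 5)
Your proposal follows essentially the same route as the paper: the same closed forms $u_{0:T-1}^{\star}=\mc C_T^{\dag}\subscr{y}{f}$ and $\hat u_{0:T-1}=W\mc C_T^{\transpose}(\mc C_T W\mc C_T^{\transpose})^{-1}\subscr{y}{f}$ with $W=U_{0:T-1}U_{0:T-1}^{\transpose}/(\sigma^2 N)$, an exact first-order decomposition of the error in terms of $W-I$ (the paper obtains the identical decomposition via the resolvent identity $(X+Y)^{-1}=Y^{-1}-(X+Y)^{-1}XY^{-1}$; your add-and-subtract of $\mc C_T^{\dag}$ is an equivalent rearrangement, and your identity checks out), and the same invocation of Lemma~\ref{lemma:non-asym-bound} with $q=mT$. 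Your Borel--Cantelli treatment of the almost-sure statement with $\delta_N=1/N^2$ is correct and in fact more explicit than the paper, which leaves that step implicit. The one point of divergence is the constant: your grouping bounds $\|\mc C_T^{\dag}\mc C_T\|_2\le 1$ and so produces $\bigl(1+\tfrac{(1+\eta)^2}{(1-\eta)^2}\kappa(\mc C_T)\bigr)$ rather than the advertised $\bigl(1+\tfrac{1+\eta}{1-\eta}\kappa^{2}(\mc C_T)\bigr)$, as you note. This is not a real defect: a strict submultiplicative accounting of the paper's own grouping ($\|W\|_2\,\|\mc C_T\|_2^2\,\|(\mc C_T W\mc C_T^{\transpose})^{-1}\|_2$) likewise yields squared ratios $\tfrac{(1+\eta)^2}{(1-\eta)^2}\kappa^2(\mc C_T)$, so your bound is of the same form and, having only one power of $\kappa(\mc C_T)$, is actually sharper whenever $\kappa(\mc C_T)\ge\tfrac{1+\eta}{1-\eta}$; it suffices for both the qualitative message of \eqref{eq:nonasym-approx} and the almost-sure convergence.
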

\begin{proof} Since $Y_{T} = \mc C_{T}U_{0:T-1}$, the data-driven input in \eqref{eq:dd-min-energy-approx} can be written as
\begin{linenomath*}
\begin{equation}
	\hat u_{0:T-1} = U_{0:T-1}(\mc C_{T}U_{0:T-1})^{\dag} \subscr{y}{f} = \frac{1}{\sigma^{2}N}U_{0:T-1}U_{0:T-1}^{\transpose} \mc C_{T}^{\transpose}\left(\frac{1}{\sigma^{2}N} \mc C_{T}U_{0:T-1}U_{0:T-1}^{\transpose}\mc C_{T}^{\transpose}\right)^{-1}\subscr{y}{f},
\end{equation}
\end{linenomath*}
where we used that $X^{\dag}=X^{\transpose}(XX^{\transpose})^{-1}$, if $X$ has full row rank, e.g., see \cite{AB-TNEG:03}. By using the above expression and the fact that $u_{0:T-1}^{\star} = \mc C_{T}^{\dag}\subscr{y}{f}$ \cite{TK:80}, we have
\begin{linenomath*}
\begin{equation}
	e:=u_{0:T-1}^{\star}-\hat u_{0:T-1} = \left[\mc C_{T}^{\dag} -\frac{1}{\sigma^{2}N}U_{0:T-1}U_{0:T-1}^{\transpose} \mc C_{T}^{\transpose}\left(\frac{1}{\sigma^{2}N}\mc C_{T}U_{0:T-1}U_{0:T-1}^{\transpose}\mc C_{T}^{\transpose}\right)^{-1}\right]\subscr{y}{f}.
\end{equation}
\end{linenomath*}
By defining the matrix $V_{0:T-1}:=U_{0:T-1}U_{0:T-1}^{\transpose} - \sigma^{2}NI$, the latter equation can be written as
\begin{linenomath*}
\begin{eqnarray}
	e &=& \left[\mc C_{T}^{\dag} -\frac{1}{\sigma^{2}N}(V_{0:T-1}+\sigma^{2}NI) \mc C_{T}^{\transpose}\left(\frac{1}{\sigma^{2}N}\mc C_{T}V_{0:T-1} \mc C_{T}^{\transpose}+\mc C_{T}\mc C_{T}^{\transpose}\right)^{-1}\right]\subscr{y}{f}\notag \\
	&=& \bigg[\mc C_{T}^{\dag} -\frac{1}{\sigma^{2}N}(V_{0:T-1}+\sigma^{2}NI) \mc C_{T}^{\transpose}\left((\mc C_{T}\mc C_{T}^{\transpose})^{-1}-\bigg(\frac{1}{\sigma^{2}N}\mc C_{T}V_{0:T-1} \mc C_{T}^{\transpose}+\mc C_{T}\mc C_{T}^{\transpose}\right)^{-1}\cdot\notag \\
	& \cdot& \frac{1}{\sigma^{2}N}\mc C_{T}V_{0:T-1} \mc C_{T}^{\transpose}(\mc C_{T}\mc C_{T}^{\transpose})^{-1}\bigg)\bigg]\subscr{y}{f} \notag\\
	&=& \left[-I +\frac{1}{\sigma^{2}N}U_{0:T-1}U_{0:T-1}^{\transpose} \mc C_{T}^{\transpose}\left(\frac{1}{\sigma^{2}N}\mc C_{T}U_{0:T-1}U_{0:T-1}^{\transpose} \mc C_{T}^{\transpose}\right)^{-1}\mc C_{T}\right]\frac{1}{\sigma^{2}N}V_{0:T-1} \mc C_{T}^{\dag}\subscr{y}{f}, \label{eq:error}
\end{eqnarray}
\end{linenomath*}
where in the second step we used the matrix identity $(X+Y)^{-1}=Y^{-1}-(X+Y)^{-1}XY^{-1}$, which holds for square matrices $X$, $Y$ with $Y$ and $X+Y$ being non-singular (e.g., see \cite[p.~151]{SSR:17}), and in the last step the identity $\mc C_{T}^{\dag} =\mc C_{T}^{\transpose}(\mc C_{T}\mc C_{T}^{\transpose})^{-1}$ which follows from the fact that $\mc C_{T}$ has full row rank because the network is output controllable by assumption. Thus, from \eqref{eq:error}, the triangle inequality and the submultiplicativity of the~2-norm:
\begin{linenomath*}
\begin{eqnarray}
	\|e\|_{2} &\le& \left(1 +\left\|\frac{1}{\sigma^{2}N}U_{0:T-1}U_{0:T-1}^{\transpose} \mc C_{T}^{\transpose}\left(\frac{1}{\sigma^{2}N}\mc C_{T}U_{0:T-1}U_{0:T-1}^{\transpose} \mc C_{T}^{\transpose}\right)^{-1}\mc C_{T}\right\|_{2}\right)\left\|\frac{1}{\sigma^{2}N}V_{0:T-1} \mc C_{T}^{\dag}\subscr{y}{f}\right\|_{2} \notag \\
	&\le& \left(1 +\left\|\frac{1}{\sigma^{2}N}U_{0:T-1}U_{0:T-1}^{\transpose}\right\|_{2} \left\|\mc C_{T}\right\|_{2}^{2}\left\|\left(\frac{1}{\sigma^{2}N}\mc C_{T}U_{0:T-1}U_{0:T-1}^{\transpose} \mc C_{T}^{\transpose}\right)^{-1}\right\|_{2}\right)\left\|\frac{1}{\sigma^{2}N}V_{0:T-1}\right\|_{2}\cdot\notag \\
	& \cdot&  \left\|\mc C_{T}^{\dag}\right\|_{2} \left\|\subscr{y}{f}\right\|_{2} \notag\\
	&\le& \left(1 + \frac{\sigma_{\max}\left(\frac{1}{\sqrt{\sigma^{2}N}}U_{0:T-1}\right)\sigma^{2}_{\max}(\mc C_{T})}{\sigma_{\min}\left(\frac{1}{\sqrt{\sigma^{2}N}}U_{0:T-1}\right)\sigma^{2}_{\min}(\mc C_{T})}\right)\frac{\left\|\frac{1}{\sigma^{2}N}V_{0:T-1}\right\|_{2}}{\sigma_{\min}(\mc C_{T})} \left\|\subscr{y}{f}\right\|_{2}, \label{eq:error2}
\end{eqnarray}
\end{linenomath*}
where in the last step we used that $\sigma_{\min}(XY)\ge \sigma_{\min}(X)\sigma_{\min}(Y)$\footnote{Indeed, if $X$, $Y$ have full row rank, it holds $\sigma_{\min}(XY) = \lambda_{\min}(XYY^{\transpose}X^{\transpose})\ge \lambda_{\min}(YY^{\transpose})\lambda_{\min}(XX^{\transpose})=\sigma_{\min}(X)\sigma_{\min}(Y)$, where $\lambda_{\min}(\cdot)$ denotes the smallest eigenvalue of a symmetric matrix and we used that $P\succeq \lambda_{\min}(P)I$ if $P\succeq 0$.} and $\|X^{\dag}\|=\sigma_{\min}^{-1}(X)$, for matrices $X$, $Y$ with full row rank. The result now follows from \eqref{eq:error2}, by invoking Lemma \ref{lemma:non-asym-bound}.
\end{proof}
From the non-asymptotic bound in \eqref{eq:nonasym-approx} of Theorem \ref{thm:dd-min-energy-approx}, for a fixed number $N$ of i.i.d.~Gaussian data, the larger $\sigma_{\min}(\mc C_{T})$ is, the closer the data-driven input in \eqref{eq:dd-min-energy-approx} to the minimum-energy one is. Since $\sigma_{\min}^{-1}(\mc C_{T})$ equals the worst-case control energy required to reach a unit-norm target \cite{FP-SZ-FB:13q}, it follows that networks that are ``easy'' to control (i.e., networks featuring a large $\sigma_{\min}(\mc C_{T})$) yield the most favorable approximation performance. In other words, the more ``excitable'' the network dynamics \cite{SD-HM-NM-BR-ST:18} (i.e., the larger $\sigma_{\min}(\mc C_{T})$) are, the lower the approximation error is.

\section{Data-driven optimal control inputs with noisy data}

\subsection{Data corrupted by small noise}

Consider the minimum-energy data-driven expressions \eqref{eq:dd-min-energy-compact}, \eqref{eq:dd-min-energy-approx}, and assume that the data matrices $U_{0:T-1}$, $Y_{T}$ have full (row) rank. Since the Moore--Penrose pseudoinverse of a full (row or column) rank matrix $X$ is a continuous function of the entries of $X$ (in the set of matrices preserving the rank of $X$) \cite[Ch.~6]{AB-TNEG:03}, it follows that \eqref{eq:dd-min-energy-compact} and \eqref{eq:dd-min-energy-approx} are continuous functions of the data matrices around their true values. Thus, small perturbations of the entries of $U_{0:T-1}$, $Y_{T}$, yield a small deviation of the data-driven expressions \eqref{eq:dd-min-energy-compact} and \eqref{eq:dd-min-energy-approx} from their correct values. A similar argument applies to the optimal data-driven control \eqref{eq:dd}, provided that the singular values of the pseudoinverse of $LK_{Y_{T}}$ (which is not typically of full rank) are truncated by small constant $\varepsilon>0$ to preserve the rank of $LK_{Y_{T}}$ when small perturbations are applied to the data matrices $Y_{1:T-1}$ and $Y_{T}$.

\subsection{Data corrupted by i.i.d. noise with zero mean and known variance}
We assume that the data matrices $U_{0:T-1}$, $Y_{1:T-1}$, $Y_{T}$ are corrupted by i.i.d.~noise with zero mean and finite variance. Namely, we consider the following dataset 
\begin{linenomath*} 
\begin{equation} 
\begin{aligned}\label{eq:data-noise}
	U_{0:T-1} &= \bar U_{0:T-1} +\Delta_{U}, \\
	Y_{1:T-1} &= \bar Y_{1:T-1} + \Delta_{Y}, \\ 
	Y_{T} &= \bar Y_{T} + \Delta_{Y_{T}},
\end{aligned}
\end{equation}
\end{linenomath*}
where $\bar U_{0:T-1}$, $\bar Y_{1:T-1}$, and $\bar Y_{T}$ denote the ground truth values, whereas $\Delta_{U}$, $\Delta_{Y}$, and $\Delta_{Y_{T}}$ are independent random matrices with i.i.d.~entries with zero mean and variance $\sigma_{U}^{2}$, $\sigma_{Y}^{2}$, and $\sigma_{Y_{T}}^{2}$, respectively. 

The data-driven controls in \eqref{eq:dd}, \eqref{eq:dd-min-energy}, \eqref{eq:dd-min-energy-compact}, and \eqref{eq:dd-min-energy-approx} computed from the noisy data in \eqref{eq:data-noise} are typically biased and do not converge to the true control input as the data size $N$ grows to infinity. For a concrete example of the latter fact, consider the approximate data-driven control in \eqref{eq:dd-min-energy-approx}, the scalar ($p=m=1$) system $x(t+1)=Ax(t)+u(t)$, $y(t)=x(t)$, and a unitary control horizon ($T=1$). In this simple scenario, \eqref{eq:dd-min-energy-approx} simplifies to
\begin{linenomath*}
\begin{equation} 
 \hat u_{0} = \frac{\sum_{i=1}^{N} u_{1}^{(i)} y_{1}^{(i)}}{\sum_{i=1}^{N} (y_{1}^{(i)})^{2}} = \frac{\sum_{i=1}^{N} (\bar u_{1}^{(i)}+\delta_{U}^{(i)}) (\bar y_{1}^{(i)}+\delta_{Y_{T}}^{(i)})}{\sum_{i=1}^{N} (\bar y_{1}^{(i)}+\delta_{Y_{T}}^{(i)})^{2}},
 \end{equation}
\end{linenomath*} 
where $\bar U_{0}=[\bar u_{0}^{(1)} \cdots \bar u_{0}^{(1)}]$, $\bar Y_{1}=[\bar y_{1}^{(1)} \cdots \bar y_{1}^{(N)}]$, and $\Delta_{U}=[\delta_{U}^{(1)} \cdots \delta_{U}^{(N)}]$, $\Delta_{Y_{T}}=[\delta_{Y_{T}}^{(1)} \cdots \delta_{Y_{T}}^{(N)}]$ denote the true data and noise samples, respectively. By the Strong Law of Large Numbers \cite{AWV00} and the assumption on the noise, as $N\to \infty$, it follows that
\begin{linenomath*}
\begin{equation} 
 \hat u_{0} = \frac{\frac{1}{N}\sum_{i=1}^{N} (\bar u_{1}^{(i)}+\delta_{U}^{(i)}) (\bar y_{1}^{(i)}+\delta_{Y_{T}}^{(i)})}{\frac{1}{N}\sum_{i=1}^{N} (\bar y_{1}^{(i)}+\delta_{Y_{T}}^{(i)})^{2}}\ \  \xrightarrow{\text{a.s.}} \ \  \frac{\frac{1}{N}\sum_{i=1}^{N} \bar u_{1}^{(i)}\bar y_{1}^{(i)}}{\frac{1}{N}\sum_{i=1}^{N} (\bar y_{1}^{(i)})^{2}+\sigma_{Y_{T}}^{2}}.
 \end{equation}
\end{linenomath*} 
Because of the variance term $\sigma_{Y_{T}}^{2}$ in the denominator, $\hat u_{0}$ does not converge to the noiseless control input. To remedy this situation, one could modify the data-driven expressions in order to compensate for the variance of the noise, as we detail next. 

We first consider the data-driven control in \eqref{eq:dd} and rewrite it as
\begin{linenomath*}
\begin{equation} \begin{aligned}\label{eq:dd-rew}
	\hat u_{0:T-1} &= U_{0:T-1}\left(I-K_{Y_{T}}\left(LK_{Y_{T}}\right)^{\dag} L^{\transpose} \right)Y_{T}^{\dag}\subscr{y}{f}  \\
	&= U_{0:T-1}\left(I-\Pi_{Y_{T}}L^{\transpose} \left(L\Pi_{Y_{T}}L^{\transpose} \right)^{\dag} L \right)Y_{T}^{\transpose}(Y_{T}Y_{T}^{\transpose})^{\dag}\subscr{y}{f},
\end{aligned}
\end{equation}
\end{linenomath*}
where $\Pi_{Y_{T}}=K_{Y_{T}}K_{Y_{T}}^{\transpose}=I-Y_{T}^{\dag}Y_{T}=I-Y_{T}^{\transpose}(Y_{T}Y_{T}^{\transpose})^{\dag}Y_{T}$ denote the orthogonal projection onto $\Ker(Y_{T})$ and we used that $X^{\dag}=X^{\transpose}(XX^{\transpose})^{\dag}$, for any matrix $X$, e.g., see \cite{AB-TNEG:03}. Next, we consider the following ``corrected'' version of \eqref{eq:dd-rew}
\begin{linenomath*}
\begin{align} \begin{aligned}\label{eq:dd-corr}
	\hat u^{\text{(c)}}_{0:T-1} &= U_{0:T-1}\left(I-\tilde\Pi_{Y_{T}}L^{\transpose} \left(L \Pi_{Y_{T}}L^{\transpose}-\begin{bmatrix} N\sigma_{Y}^{2}Q & 0 \\ 0 & N\sigma_{U}^{2}R\end{bmatrix}\right)_{\varepsilon}^{\dag} L \right)Y_{T}^{\transpose}(Y_{T}Y_{T}^{\transpose}-N\sigma_{Y_{T}}^{2}I)^{\dag}\subscr{y}{f},
\end{aligned}
\end{align}
\end{linenomath*}
where $\tilde\Pi_{Y_{T}}=I-Y_{T}^{\transpose}(Y_{T}Y_{T}^{\transpose}-N\sigma_{Y_{T}}^{2}I)^{\dag}Y_{T}$, $L$ is the particular square root 
\begin{linenomath*}
\begin{equation} 
L:=\begin{bmatrix} Q^{1/2} Y_{1:T-1} \\ R^{1/2}  U_{0:T-1}\end{bmatrix},
\end{equation}
\end{linenomath*}
and $X^{\dag}_{\varepsilon}$ denotes the Moore--Penrose pseudoinverse of $X$ that treats as zero the singular values of $X$ that are smaller than $\varepsilon>0$.

\begin{theorem}\label{thm:dd-min-energy-approx} Consider the noisy dataset as in \eqref{eq:data-noise} and assume that $\bar U_{0:T-1}$ has full row rank. For $\varepsilon>0$ sufficiently small and $N\to\infty$, the control sequence in \eqref{eq:dd-corr} converges almost surely to the optimal control input; that is,
\begin{linenomath*}
\begin{equation}\label{eq:dd-corr-conv}
	\hat u^{\text{(c)}}_{0:T-1} \xrightarrow{\text{a.s.}} u_{0:T-1}^{\star}.
\end{equation}
\end{linenomath*}
\end{theorem}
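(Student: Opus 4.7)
The plan is to recast \eqref{eq:dd-corr} as a continuous function (away from rank-changing loci) of a finite list of normalized empirical second-moment matrices, each of which converges almost surely to its noiseless counterpart by the Strong Law of Large Numbers. Concretely, I would first factor $1/N$ inside every pseudoinverse via the identity $(cX)^{\dag}=c^{-1}X^{\dag}$, valid whenever rank is preserved, in order to expose the normalized quantities
\begin{linenomath*}
\begin{equation*}
\tfrac{1}{N}\bigl(Y_{T}Y_{T}^{\transpose}-N\sigma_{Y_{T}}^{2}I\bigr),\quad \tfrac{1}{N}\bigl(LL^{\transpose}-\blkdiag(N\sigma_{Y}^{2}Q,N\sigma_{U}^{2}R)\bigr),\quad \tfrac{1}{N}LY_{T}^{\transpose},\quad \tfrac{1}{N}U_{0:T-1}Y_{T}^{\transpose},
\end{equation*}
\end{linenomath*}
together with the analogous block products appearing when $L\Pi_{Y_{T}}L^{\transpose}$ is expanded using $\Pi_{Y_{T}}=I-Y_{T}^{\transpose}(Y_{T}Y_{T}^{\transpose})^{\dag}Y_{T}$. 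Writing each such $\tfrac{1}{N}AB^{\transpose}$ as a sum of ground-truth and noise contributions, the independence of $\Delta_{U},\Delta_{Y},\Delta_{Y_{T}}$ and their zero mean force all cross terms to vanish a.s., while $\tfrac{1}{N}\Delta_{Y_{T}}\Delta_{Y_{T}}^{\transpose}\!\xrightarrow{\mathrm{a.s.}}\sigma_{Y_{T}}^{2}I$, $\tfrac{1}{N}\Delta_{U}\Delta_{U}^{\transpose}\!\xrightarrow{\mathrm{a.s.}}\sigma_{U}^{2}I$, and $\tfrac{1}{N}\Delta_{Y}\Delta_{Y}^{\transpose}\!\xrightarrow{\mathrm{a.s.}}\sigma_{Y}^{2}I$. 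The subtractions of $N\sigma_{Y_{T}}^{2}I$ and of $\blkdiag(N\sigma_{Y}^{2}Q,N\sigma_{U}^{2}R)$ inside \eqref{eq:dd-corr} are designed precisely to cancel these noise inflations, so each normalized building block converges a.s.~to the analogous building block of the noiseless expression \eqref{eq:dd-rew} evaluated at $(\bar U_{0:T-1},\bar Y_{1:T-1},\bar Y_{T})$.

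The second step is to propagate these convergences through the remaining matrix operations (addition, multiplication, and pseudoinverse) by the continuous mapping theorem. The Moore--Penrose inverse is continuous only on the set of matrices of a fixed rank, and this is precisely the role of the threshold in $(\cdot)^{\dag}_{\varepsilon}$: choosing $\varepsilon$ strictly smaller than the smallest nonzero singular value of the normalized limiting matrix $\tfrac{1}{N}\bar L\bar\Pi_{\bar Y_{T}}\bar L^{\transpose}$ guarantees that, for all $N$ large enough, the truncated pseudoinverse selects the correct rank and hence is continuous at the limit. A parallel argument handles the outer factor $Y_{T}^{\transpose}(Y_{T}Y_{T}^{\transpose}-N\sigma_{Y_{T}}^{2}I)^{\dag}$ and the tilde-projection $\tilde\Pi_{Y_{T}}$, using the full row rank of $\bar U_{0:T-1}$ (which through $\bar Y_{T}=\mathcal C_{T}\bar U_{0:T-1}$ fixes the rank of the corresponding limit). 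Chaining these continuous operations yields that $\hat u^{\text{(c)}}_{0:T-1}$ converges almost surely to the noiseless expression \eqref{eq:dd-rew} applied to the ground-truth data, which by the derivation leading to \eqref{eq:dd} coincides with $u^{\star}_{0:T-1}$, establishing \eqref{eq:dd-corr-conv}.

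The main obstacle I expect is the interaction between the variance corrections and the pseudoinverses. Two technical points must be handled: (i) the scaling identity $(cX)^{\dag}=c^{-1}X^{\dag}$ requires rank preservation, so one has to show that the almost-sure fluctuations of $\tfrac{1}{N}(Y_{T}Y_{T}^{\transpose}-N\sigma_{Y_{T}}^{2}I)$ and of $\tfrac{1}{N}(L\Pi_{Y_{T}}L^{\transpose}-\blkdiag(N\sigma_{Y}^{2}Q,N\sigma_{U}^{2}R))$ around their limits are eventually smaller than the spectral gap between the nonzero singular values and zero --- this is exactly what ``$\varepsilon$ sufficiently small'' encodes --- and (ii) since the inner projection $\Pi_{Y_{T}}$ is built from the uncorrected $Y_{T}$, the term $\tfrac{1}{N}LY_{T}^{\transpose}(Y_{T}Y_{T}^{\transpose})^{\dag}Y_{T}L^{\transpose}$ must be rewritten so that the single remaining inverse can be paired with a $\blkdiag$ correction to expose the correct a.s.~limit. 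Once these two points are pinned down, the SLLN plus continuous-mapping skeleton delivers the result essentially by routine computation.
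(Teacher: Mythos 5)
Your proposal follows essentially the same route as the paper's proof: rewrite \eqref{eq:dd-corr} as a function of $1/N$-normalized second-moment matrices, identify their almost-sure limits via the Strong Law of Large Numbers (the subtracted variance terms cancelling exactly the noise-induced inflation of the diagonal blocks), and conclude by the Continuous Mapping Theorem, with the $\varepsilon$-truncated pseudoinverse securing continuity at the limiting rank. The paper packages this as convergence of five matrices $P_{1},\dots,P_{5}$ to limits $\bar P_{1},\dots,\bar P_{5}$, but the argument is the same as yours.
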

\begin{proof} 
After some algebraic manipulations, \eqref{eq:dd-corr} can be written as
\begin{linenomath*}
\begin{eqnarray}\label{eq:dd-corr-exp}
\hat u^{\text{(c)}}_{0:T-1}  &=&  U_{0:T-1}\left(I-\tilde\Pi_{Y_{T}}L^{\transpose} \left(L \Pi_{Y_{T}}L^{\transpose}-\begin{bmatrix} N\sigma_{Y}^{2}Q & 0 \\ 0 & N\sigma_{U}^{2}R\end{bmatrix}\right)_{\varepsilon}^{\dag} L \right)Y_{T}^{\transpose}(Y_{T}Y_{T}^{\transpose}-N\sigma_{Y_{T}}^{2}I)^{\dag}\subscr{y}{f}\notag \\
& =&U_{0:T-1}Y_{T}^{\transpose}(Y_{T}Y_{T}^{\transpose}-N\sigma_{Y_{T}}^{2}I)^{\dag}\subscr{y}{f} - U_{0:T-1}\tilde\Pi_{Y_{T}}L^{\transpose}  \cdot \notag\\
&\cdot& \left(L \Pi_{Y_{T}}L^{\transpose}-\begin{bmatrix} N\sigma_{Y}^{2}Q & 0 \\ 0 & N\sigma_{U}^{2}R\end{bmatrix}\right)_{\varepsilon}^{\dag} L Y_{T}^{\transpose}(Y_{T}Y_{T}^{\transpose}-N\sigma_{Y_{T}}^{2}I)^{\dag}\subscr{y}{f}\notag \\
& =&P_{2}(P_{1}-\sigma_{Y_{T}}^{2}I)^{\dag}\subscr{y}{f} - \left(P_{3} - P_{2}(P_{1}-\sigma_{Y_{T}}^{2}I)^{\dag}P_{4} \right)\cdot \notag\\
&\cdot&\left(P_{4}^{\transpose}(P_{1}-\sigma_{Y_{T}}^{2}I)^{\dag}P_{4}+ P_{5}-\begin{bmatrix} \sigma_{Y}^{2}Q & 0 \\ 0 & \sigma_{U}^{2}R\end{bmatrix}\right)_{\varepsilon}^{\dag} P_{4}^{\transpose}(P_{1}-\sigma_{Y_{T}}^{2}I)^{\dag}\subscr{y}{f}
\end{eqnarray}
\end{linenomath*}
where $P_{1} = \frac{1}{N} Y_{T}Y_{T}^{\transpose}$, $P_{2} = \frac{1}{N}U_{0:T-1}Y_{T}^{\transpose}$, $P_{3} = \frac{1}{N}U_{0:T-1}L^{\transpose}$, $P_{4} = \frac{1}{N}Y_{T}L^{\transpose}$, and $P_{5} = \frac{1}{N}L L^{\transpose}$.  By the Strong Law of Large Numbers \cite{AWV00} and the assumption on the noise, as $N\to \infty$, it follows that
\begin{linenomath*}
\begin{eqnarray}\label{eq:LLN}
\begin{split}
	&P_{1} = \frac{1}{N} Y_{T}Y_{T}^{\transpose} \xrightarrow{\text{a.s.}}  \frac{1}{N}\bar{Y}_{T}\bar{Y}_{T}^{\transpose} + \sigma_{Y_{T}}^{2}I =:\bar P_{1}, \\&
	P_{2} = \frac{1}{N}U_{0:T-1}Y_{T}^{\transpose} \xrightarrow{\text{a.s.}}  \frac{1}{N}\bar U_{0:T-1}\bar Y_{T}^{\transpose}=:\bar P_{2}, \\
	&P_{3} = \frac{1}{N}U_{0:T-1}L^{\transpose} \xrightarrow{\text{a.s.}} \frac{1}{N} \bar U_{0:T-1}\begin{bmatrix}  \bar Y_{1:T-1}^{\transpose}Q^{1/2} &  \bar U_{0:T-1}^{\transpose}R^{1/2}\end{bmatrix}=:\bar P_{3}, \\
	&P_{4} = \frac{1}{N}Y_{T}L^{\transpose} \xrightarrow{\text{a.s.}}  \frac{1}{N}\bar Y_{T}\begin{bmatrix}  \bar Y_{1:T-1}^{\transpose}Q^{1/2} &  \bar U_{0:T-1}^{\transpose}R^{1/2}\end{bmatrix}=: \bar P_{4}, \\
	&P_{5} = \frac{1}{N}LL^{\transpose} \xrightarrow{\text{a.s.}} \begin{bmatrix}  \frac{1}{N} Q^{1/2} \bar Y_{1:T-1}  \bar Y_{1:T-1}^{\transpose} Q^{1/2}+ \sigma_{Y}^{2}Q & \frac{1}{N} Q^{1/2} \bar Y_{1:T-1}  \bar U_{0:T-1}^{\transpose} R^{1/2} \\ \frac{1}{N}R^{1/2}\bar U_{0:T-1}  \bar Y_{1:T-1}^{\transpose}Q^{1/2} &  \frac{1}{N}R^{1/2}\bar U_{0:T-1}  \bar U_{0:T-1}^{\transpose}R^{1/2} + \sigma_{U}^{2}R\end{bmatrix}=: \bar P_{5}.
\end{split}
\end{eqnarray}
\end{linenomath*}
Notice that $\hat u^{\text{(c)}}_{0:T-1}$ is a continuous function of $P_{i}$ around $P_{i}=\bar P_{i}$ for $i=1,\dots,5$ and $\varepsilon>0$ sufficiently small. In light of this fact, \eqref{eq:dd-corr-conv} follows by using \eqref{eq:LLN} and the Continuous Mapping Theorem \cite[Theorem 2.3]{AWV00}. 
\end{proof}

Following the same argument as above, it is possible to establish asymptotically correct data-driven expressions of minimum-energy controls ($Q=0$, $R=I$). Specifically, the corrected version of \eqref{eq:dd-min-energy} reads as
\begin{linenomath*}
\begin{equation} \begin{aligned}\label{eq:dd-min-energy-corr}
	\hat u^{\text{(c)}}_{0:T-1}  = (I-U_{0:T-1}\tilde \Pi_{Y_{T}}(U_{0:T-1}\tilde \Pi_{Y_{T}}U_{0:T-1}^{\transpose}-N\sigma_{U}^{2}I)_{\varepsilon}^{\dag})U_{0:T-1}(Y_{T}Y_{T}^{\transpose}-N\sigma_{Y_{T}}^{2}I)^{\dag}\subscr{y}{f},
\end{aligned}
\end{equation}
\end{linenomath*}
whereas the corrected version of the compact data-driven control in \eqref{eq:dd-min-energy-compact} is
\begin{linenomath*}
\begin{equation} \begin{aligned}\label{eq:dd-min-energy-compact-corrected}
	\hat u^{\text{(c)}}_{0:T-1} = (Y_{T}U_{0:T-1}^{\transpose}(U_{0:T-1}U_{0:T-1}^{\transpose}- N\sigma_{U}^{2}I)^{\dag})^{\dag}\subscr{y}{f}.
\end{aligned}
\end{equation}
\end{linenomath*}
Finally, the corrected approximate minimum-energy control in \eqref{eq:dd-min-energy-approx} reads as
\begin{linenomath*}
\begin{equation} \begin{aligned}\label{eq:dd-min-energy-approx-corr}
	\hat u^{\text{(c)}}_{0:T-1}  = U_{0:T-1}(Y_{T}Y_{T}^{\transpose}-N\sigma_{Y_{T}}^{2}I)^{\dag}\subscr{y}{f}.
\end{aligned}
\end{equation}
\end{linenomath*}
It is worth noting that, while \eqref{eq:dd-min-energy-corr} requires correction terms for both input and output noises, \eqref{eq:dd-min-energy-compact-corrected} and \eqref{eq:dd-min-energy-approx-corr} include correction terms only for one source of noise (input noise in \eqref{eq:dd-min-energy-compact-corrected} and output noise in \eqref{eq:dd-min-energy-approx-corr}). In particular, if the noise corrupts output data only, \eqref{eq:dd-min-energy-compact-corrected} coincides with the original data-driven control in \eqref{eq:dd-min-energy-compact}.

\newpage
\bibliography{alias,FP,Main,New}